\newtheorem{theorem}{Theorem}
\newtheorem{definition}[theorem]{Definition}
\newtheorem{proposition}[theorem]{Proposition}
\newcommand\myeq{\mathrel{\stackrel{\makebox[0pt]{\mbox{\normalfont\tiny def}}}{=}}}
\newcommand{\Cons}{\text{C}}
\DeclareMathOperator{\dis}{ \bf dis}
\definecolor{tocorrect}{rgb}{0.97, 0.04, 0.56}
\newcounter{lnote}
\newcommand{\comment}[1]{}
\begin{document}

\title{A continuation method in Bayesian inference}
%
\date{\vspace{-5ex}}
\author[1]{Ben Mansour Dia~\thanks{College of Petroleum Engineering and Geosciences (CPG), King Fahd  University of Petroleum and Minerals (KFUPM), Dhahran 31261, Saudi Arabia, ben.dia@kfupm.edu.sa}}

\maketitle
\begin{abstract}
We present a continuation method that entails generating a sequence of transition probability density functions from the prior to the posterior in the context of Bayesian inference for parameter estimation problems. The characterization of transition distributions, by tempering the likelihood function, results in a homogeneous nonlinear partial integro-differential equation whose existence and uniqueness of solutions are addressed. The posterior probability distribution comes as the interpretation of the final state of the path of transition distributions. A computationally stable scaling domain for the likelihood is explored for the approximation of the expected deviance, where we manage to hold back all the evaluations of the forward predictive model at the prior stage. It follows the computational tractability of the posterior distribution and opens access to the posterior distribution for direct samplings. To get a solution formulation of the expected deviance, we derive a partial differential equation governing the moments generating function of the log-likelihood. We show also that a spectral formulation of the expected deviance can be obtained for low-dimensional problems under certain conditions. The computational efficiency of the proposed method is demonstrated through three differents numerical examples that focus on analyzing the computational bias generated by the method, assessing the continuation method in the Bayesian inference with non-Gaussian noise, and evaluating its ability to invert a multimodal parameter of interest. 
 \\[2pt]

\noindent \emph{Keywords:} Bayes' theorem, Posterior distribution, Barbashin equation, Banach space, Moments generating function, Differential equations.
\\[2pt]

\noindent AMS 2010 subject classification: 62F15, 45E05, 35B60, 35F31, 65C05 
\end{abstract}

\section{Introduction}\label{sec1}

There is a broad consensus that in statistical problems for parameter estimation under uncertainty, the Bayesian approach provides the method of choice since it measures the uncertainty in the knowledge of the parameter of interest with the interpretation of the probability density function (PDF). The Bayesian approach to inverse problems leads to the theory of updating a prior PDF  with the information carried by the data, see \cite{DashtiStuart2017, Stuart2010}, through the Bayes' rule, with the aim of computing the expected value of a quantity of interest from the posterior PDF. One important challenge encountered, when deploying the Bayes' theorem, is the non-accessibility of the posterior PDF, which stems from the fact that the normalizing factor, also known as the Bayes factor, is unknown. While significant strides have been made in designing sampling approaches that have access to the posterior PDF, the overwhelming majority of pertinent methods are either computationally demanding in terms of the number of evaluations of the forward predictive model or intractable. 
The computational tractability refers here to the ability of counting, a priori, the number of evaluations of the forward predictive model when approximating the posterior distribution. 
\\[0.25pt]

The most common way of approximating the Bayes factor is to refer to it as the marginal distribution of the likelihood function.  That approach is computationally expensive when it comes to estimate statistics under the posterior PDF. That intensive computational cost is somehow due to the large standard deviation of the prior PDF, rises the curse of dimensionality for deterministic integration methods, and likely suffers from numerical underflow for a vague prior knowledge. To overcome this issue, the importance sampling has been utilized in sophisticated settings: with Laplace approximation for estimating the expected information gain \cite{Beck2018, Beck2019}, based on a multilayered construction to quantify the nuisance effect of some components of the parameter of interest on the expected information gain \cite{Feng2019}, and in an iterative form suitable to being employed in parallel computer architectures \cite{Morzfeld2018}.

Alternatively, advance classes of Markov chain Monte Carlo (MCMC) have been constructed to address the computation of statistics under the posterior PDF, with the basic idea of overpassing the normalizing constant approximation.  Therein, the intensive computational load remains a major issue.  Moreover, the acceptance criterion in MCMC methods deters the ability to count the number of evaluations of the forward model and therefore produces the intractability of the posterior PDF.  A careful and comprehensive analysis based on the non-Markovian aspect of the counting index process is presented in \cite{Syed2019}.
A methodological approach of inspecting the transition probabilities between the prior and the posterior distributions, commonly referred to as continuation methods, has been addressed to promote MCMC methods \cite{Heng2016, Kouetal2006}.
\\[0.25pt]

We present a continuation method to address two challenging issues in Bayesian inverse problems, that of overcoming the computational intractability of the posterior distribution, and scanning the transition distributions between the prior and the posterior. We consider the continuous update of the distribution alongside to the optimal transport approach addressed in a series of papers \cite{ handbook, Moselhy2012, Reich2014}. The continuation method introduced in  \cite{Reich2014}, and used in \cite{Heng2016}, formulates the inversion as optimal control of information transport from the prior to posterior where the PDF is interpreted by the solution of a Liouville type equation with the assumption of available partial observations. 

Our approach is also an alternative that bypasses the normalizing factor estimation via marginal likelihood, and allows direct sampling from the posterior distribution in particular, and from any transition distribution between the prior and the posterior stages in general.  As in \cite{Heng2016}, we characterize the transition PDF when connecting the prior to the posterior by tempering the likelihood so that a continuous analysis of the transition PDF can be performed to assess the information carried by the data. Thereupon, we address also the computational tractability aspect with a continuous and differentiable path sampling to model the update of the prior distribution up to the posterior stage with the only assumption of continuity of the forward model with respect to the parameter of interest.

The starting point consists of composing the likelihood function with a normalized power function, and then we derive the associated transition distribution, named \textit{power posterior distribution}, using the associated Bayes' theorem.  The characterization of the power posterior distribution results in a partial integro-differential equation (PIDE) whose solution at the final state is the posterior PDF. The existence and uniqueness of the solution of the characteristic equation follow from the analysis of boundary value problems of PIDE of Barbashin type presented in \cite{Appell2009, Appell1988, Appell1994, Appell2000,  Kalitvin2013}. 

The computational tractability of the posterior PDF is reduced to the computational tractability of the expected deviance which, in turn, is obtained in two different ways. The first one assumes a null skewness condition and is derived from the spectral expansion of the kernel associated with the log-likelihood. The second formulation is built through the solution of a partial differential equation governing the moments generating function of the log-likelihood function. 

We assess the effectiveness of our method with three numerical examples. The first example is a one-dimensional algebraic forward model with additive Gaussian noise, where analytical computations facilitate an error analysis. The second example is a source inversion problem, where the forward model is the solution of the one-dimensional wave equation. In that problem, we test our method to a complicated setting of the measurement errors formed by the combination of speckle noise and uniform noise. The Wasserstein distance is used to incorporate the noise in the likelihood definition. The objective is to infer the source location and the constant amplitude of waves propagating in a one-dimensional direction. The third example is a multimodal inversion that constitutes a challenging benchmark test for sampling methods. 
\\[0.25pt]

Our contributions are presented as follows. The power posterior notion and its characterization are presented in Section \ref{sec:ppd}. In Section \ref{sec:solv}, we detail the mathematical analysis to address the existence and the uniqueness of the characteristic PIDE equation. Explicit solution formulations for tractable transition distributions are established in Section \ref{sec:sol}. Section \ref{sec:num} is dedicated to numerical experiments performed to evaluate our continuation method.

\section{Power-posterior formulation} \label{sec:ppd}
In this section, we start by stating the Bayesian inversion problem and introduce the continuation method by tempering the likelihood. The characterization of the transition distribution is also presented.

\subsection{Bayesian inversion}
Let $d$ be a non-negative integer, $\bm{\theta}_t \in \mathbb{R}^d$ be the true vector value of an unknown parameter that operates as an input in a forward predictive model denoted by $\bm{g}$. We address the inference of the model parameter $\bm{\theta}_t$, given the forward model $\bm{g}$ and noisy dataset $\bm{Y} \in \mathbb{R}^q$ formed of $q$ observed quantities of the forward model output $\bm{g}(\bm{\theta}_t)$ polluted with measurement noise. We consider computationally intensive Bayesian inverse problems, where almost the entire computational load is associated with the evaluation of $\bm{g}$.

From now,  the parameter of interest is assumed to be a random variable vector, denoted by $\bm{\theta}$ in lieu of $\bm{\theta}_t$, and is characterized by a prior PDF $\pi(\bm{\theta})$ with $\bm{\theta} = (\theta_1(\omega_1), \cdots, \theta_d(\omega_d))^T$, $\omega_i \in \Omega$, where $\Omega$ is the set of all random events. The density $\pi(\bm{\theta})$ is a prior PDF on $\bm{\theta}_t$ that is available to us, and the inference consists in updating that knowlegde on the parameter to the posterior density $\pi(\bm{\theta} | \bm{Y})$ through the data $\bm{Y}$ with distribution $p(\bm{Y})$. For that, we consider the Bayes' theorem given by
\begin{eqnarray}\label{eq:bayes}
  \pi(\bm{\theta} | \bm{Y}) = \frac{p(\bm{Y} | \bm{\theta})\pi(\bm{\theta})}{p(\bm{Y})},
\end{eqnarray}
where $p(\bm{Y} | \bm{\theta})$ is the likelihood, i.e. the function that measures  how the dataset $\bm{Y}$ is distributed for each possible value  $\bm{\theta}$ of the parameter. The selection of the likelihood is crucial in the Bayesian approach, is associated with the type of the measurement errors, and leads to the model specification. Efficient Bayesian inference, in terms of assessing the measurement errors and covering as well the model misspecification, appeals to designing the likelihood function through an optimal mapping problem for the transport of the information \cite{GangboMcCann96, Villani2003}. In this work, we focus on a general exponential-type likelihood function covering complicated noise structures, and given by
\begin{eqnarray}
\label{eq:likelihood}
 p(\bm{Y} | \bm{\theta}) =  \Cons \exp \left( - s \dis(\bm{Y}, \bm{g} (\bm{\theta}))\right).
\end{eqnarray}
Here $s$ is a parameter that quantifies the dispersion of  the measurement error on the data, $\Cons$ is a constant depending on $d$ and $s$, i.e  $\Cons \equiv \Cons(d,s)$, and $\dis$ denotes a given quadratic distance between the data sample and the forward model output sample. Possible choices for the distance $\dis$ are the Wasserstein distance, the Hellinger distance, the total variation distance, the Mahalanobis distance. Weighted versions of the mentioned distances by the distribution of the data noise are also admissible.

\subsection{Power Bayes' rule: a continuation method}
We aim to derive a continuation method that consists of producing a sequence of transition distributions, named \textit{power posterior distributions}, between the prior and the posterior stages, and that is computationally tractable. The concept of power posterior, used throughout this paper, is derived from the power Bayes' rule, and based on the \textit{thermodynamic integration}  that is a widely used for computing the Bayes factor in physical applications, see \cite{FrielHurnWyse2014, FielPettitt2008, Lartillot2006}. It is a kind of a geometric path sampling known as \textit{parallel tempering method} see \cite{GelmanMeng1998, EarlDeem2005} or \textit{replica-exchange method} in molecular dynamics see \cite{SugitaOkamoto1999, Liu2005}, which are devised to advance MCMC methods.

The power Bayes' rule is a readjusted Bayes' rule in the framework of the continuation of distribution from the prior to the posterior. For a given real $\alpha \in [0,1]$, the power posterior distribution is given by
\begin{eqnarray}
\label{eq:powerBayes}
 \pi(\bm{\theta}\vert\bm{Y};\alpha) = \frac{p(\bm{Y}\vert \bm{\theta})^\alpha \pi(\bm{\theta})}{z(\bm{Y}\vert \alpha)},
\end{eqnarray}
where $z(\bm{Y}\vert \alpha)$ is the normalizing constant and can be expressed as the marginal power likelihood. 
The variable $\alpha$ is the inverse tempering temperature, which is named here \textit{tempering parameter} for easy convenience. 
For the tempering parameter $\alpha$ varying from $0$ to $1$, the power posterior PDF encodes the knowledge about the parameter of interest from the prior $\pi(\bm{\theta}) = \pi(\bm{\theta}\vert\bm{Y};\alpha)_{\vert \alpha = 0}$ to the posterior $\pi(\bm{\theta}\vert\bm{Y}) = \pi(\bm{\theta}\vert\bm{Y}; \alpha)_{\vert \alpha =1}$, and the normalizing factor $z(\bm{Y}\vert \alpha)$ carries the data information of the "full evidence" $1 = z(\bm{Y}\vert\alpha)_{\vert \alpha = 0}$ to the sample evidence $p(\bm{Y})= z(\bm{Y}\vert\alpha)_{\vert \alpha = 1}$. 

The notation $\mathbb{E}$ stands for the expectation operator (with respect to the prior PDF), a subscript will be used to specify the associated random variable whenever there is a need for extra clarity. The Leibniz integral rule and the property of interchanging the integral over $\Theta$, the space of the parameter of interest, and the derivation with respect to the tempering parameter $\alpha$ are widely used throughout the manuscript. We will comment on it whenever the utilization seems not obvious.

Next, we characterize the power posterior PDF $\pi(\bm{\theta}|\bm{Y};\alpha)$ with a dynamical equation of which the solution at the final state $\alpha =1$ stands for the posterior PDF.

\subsection{Characterization of the power posterior PDF}
We present the derivation of the equation governing the power posterior PDF $\pi(\bm{\theta}|\bm{Y};\alpha)$. Before all, it is worth mentioning that  the power posterior PDF concentrates when the tempering parameter $\alpha$ grows.
That observation depicts the knowledge updating from the prior to the posterior. To characterize the power posterior, we differentiate the rule \eqref{eq:powerBayes} with respect to the tempering variable $\alpha$, and we get
\begin{eqnarray}
 \frac{\partial \pi(\bm{\theta}|\bm{Y};\alpha)}{\partial \alpha} 
 &=&   \frac{\partial }{\partial \alpha} \left( \frac{ p^\alpha (\bm{Y} | \bm{\theta} )}{z(\bm{Y} | \alpha)} \right) \pi(\bm{\theta})\nonumber \\
 &=& \left( \frac{ \log p(\bm{Y}|\bm{\theta}) p^\alpha (\bm{Y} | \bm{\theta} ) }{z(\bm{Y} | \alpha)} - \frac{ p^\alpha (\bm{Y} | \bm{\theta} ) }{z^2(\bm{Y} | \alpha)} \int_\Theta  \log p(\bm{Y}|\bm{\theta}) p^\alpha (\bm{Y} | \bm{\theta} ) \pi(\bm{\theta}) \hbox{d} \bm{\theta}  \right)\pi(\bm{\theta})\nonumber \\
 &=&   \log p(\bm{Y}|\bm{\theta})\frac{ p^\alpha (\bm{Y} | \bm{\theta} ) \pi(\bm{\theta})}{z(\bm{Y} | \alpha)} - \frac{ p^\alpha (\bm{Y} | \bm{\theta} ) \pi(\bm{\theta}) }{z(\bm{Y} | \alpha)} \mathbb{E}_{\bm{\theta}|\bm{Y};\alpha}\left[ \log p(\bm{Y}|\bm{\theta})\right]\nonumber \\
 &=& \Big( \log p(\bm{Y}|\bm{\theta}) - \mathbb{E}_{\bm{\theta}|\bm{Y};\alpha} \left[\log p(\bm{Y}|\bm{\theta})\right] \Big) \pi (\bm{\theta} | \bm{Y}; \alpha). \nonumber
\end{eqnarray}
The expectation of the log-likelihood with respect to the power posterior PDF, $ \mathbb{E}_{\bm{\theta}|\bm{Y};\alpha} \left[\log p(\bm{Y}|\bm{\theta})\right]$, is known as the \textit{expected deviance}. It is used in Bayesian model selection to assess the fitting goodness as it generalizes the Akaike information criterion (AIC), see \cite{Francois2011}.  We are now in the position to state the characterization result.
\begin{proposition}
The power posterior distribution solves the partial integro-differential equation
\begin{eqnarray}
\label{eq:pide}
\frac{\partial \pi(\bm{\theta}|\bm{Y};\alpha)}{\partial \alpha}  = \Big( \log p(\bm{Y}|\bm{\theta}) - \mathbb{E}_{\bm{\theta}|\bm{Y};\alpha} \left[\log p(\bm{Y}|\bm{\theta})\right] \Big) \pi (\bm{\theta} | \bm{Y};\alpha),\quad \hbox{for}\quad 0< \alpha  \leq 1,
\end{eqnarray}
with initial condition 
\begin{eqnarray*}
\pi(\bm{\theta}|\bm{Y};\alpha = 0) = \pi(\bm{\theta}).
\end{eqnarray*}
\end{proposition}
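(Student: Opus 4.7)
The plan is to establish the PIDE by direct differentiation of the power Bayes' rule \eqref{eq:powerBayes} with respect to the tempering parameter $\alpha$, then recast the resulting expression in terms of the expected deviance. Since the right-hand side of \eqref{eq:powerBayes} factors as $p(\bm{Y}|\bm{\theta})^\alpha \pi(\bm{\theta}) / z(\bm{Y}|\alpha)$ and only the first and third factors depend on $\alpha$, the strategy is to apply the product rule, compute each piece separately, and identify the bracketed quantity as the difference between the log-likelihood and its power-posterior expectation.

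First, I would compute $\partial_\alpha p(\bm{Y}|\bm{\theta})^\alpha = \log p(\bm{Y}|\bm{\theta}) \, p(\bm{Y}|\bm{\theta})^\alpha$, which is straightforward provided $p(\bm{Y}|\bm{\theta})>0$ (guaranteed by the exponential form \eqref{eq:likelihood}). Next, I would treat the normalizing factor
\begin{eqnarray*}
z(\bm{Y}|\alpha) = \int_{\Theta} p(\bm{Y}|\bm{\theta})^\alpha \pi(\bm{\theta}) \, \hbox{d}\bm{\theta},
\end{eqnarray*}
and differentiate under the integral sign (Leibniz rule) to obtain
\begin{eqnarray*}
\frac{\partial z(\bm{Y}|\alpha)}{\partial \alpha} = \int_{\Theta} \log p(\bm{Y}|\bm{\theta}) \, p(\bm{Y}|\bm{\theta})^\alpha \pi(\bm{\theta}) \, \hbox{d}\bm{\theta} = z(\bm{Y}|\alpha)\, \mathbb{E}_{\bm{\theta}|\bm{Y};\alpha}\!\left[\log p(\bm{Y}|\bm{\theta})\right].
\end{eqnarray*}
Combining these two derivatives through the product rule applied to $p(\bm{Y}|\bm{\theta})^\alpha / z(\bm{Y}|\alpha)$, and pulling out the common factor $\pi(\bm{\theta}|\bm{Y};\alpha) = p(\bm{Y}|\bm{\theta})^\alpha \pi(\bm{\theta})/z(\bm{Y}|\alpha)$, yields exactly the stated PIDE.

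The initial condition follows immediately by evaluating \eqref{eq:powerBayes} at $\alpha=0$: the numerator reduces to $\pi(\bm{\theta})$ and the normalizing factor $z(\bm{Y}|0) = \int_\Theta \pi(\bm{\theta})\hbox{d}\bm{\theta} = 1$, so $\pi(\bm{\theta}|\bm{Y};0)=\pi(\bm{\theta})$.

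The only nontrivial step, and the one I would flag as the main obstacle, is justifying the interchange of differentiation and integration in the computation of $\partial_\alpha z(\bm{Y}|\alpha)$. A clean way to handle this is to invoke a dominated-convergence style argument: for $\alpha$ in any compact subinterval of $[0,1]$, the integrand $\log p(\bm{Y}|\bm{\theta})\, p(\bm{Y}|\bm{\theta})^\alpha \pi(\bm{\theta})$ is dominated by an integrable envelope, owing to the exponential form of the likelihood in \eqref{eq:likelihood} together with the assumed continuity of $\bm{g}$ in $\bm{\theta}$ and the integrability of the prior. This is precisely the Leibniz rule that the manuscript has flagged as a standing assumption just before the proposition, so invoking it here is legitimate, and the remainder of the derivation is the algebraic manipulation already displayed in the preceding lines.
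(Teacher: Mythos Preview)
Your proposal is correct and follows essentially the same approach as the paper: differentiate the power Bayes' rule \eqref{eq:powerBayes} with respect to $\alpha$, compute $\partial_\alpha p^\alpha = (\log p)\,p^\alpha$ and $\partial_\alpha z(\bm{Y}|\alpha) = z(\bm{Y}|\alpha)\,\mathbb{E}_{\bm{\theta}|\bm{Y};\alpha}[\log p(\bm{Y}|\bm{\theta})]$ via Leibniz, and factor out the power posterior. Your explicit verification of the initial condition and your remark on justifying the differentiation--integration interchange are both appropriate and align with the paper's standing assumption stated just before the proposition.
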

Similar characterization of transition distributions with a general tempering function is presented in \cite{Heng2016}, where the expected deviance is seen as a reference value that controls the evolution of $\pi(\bm{\theta}|\bm{Y};\alpha)$ with respect to the tempering parameter $\alpha$. Our interests in the PIDE \eqref{eq:pide} is two-fold. 
\begin{itemize}
\item The first issue is subsequent to its derivation and  is the solvability. By this we mean the existence and the uniqueness of a solution. Note that the characteristic equation has a type of logistic dynamics with intrinsic decay rate of 
\begin{eqnarray*}
 r = \frac{\log p(\bm{Y}|\bm{\theta}) - \mathbb{E}_{\bm{\theta}|\bm{Y};\alpha} \left[\log p(\bm{Y}|\bm{\theta})\right]}{s} =  \underbrace{\mathbb{E}_{\bm{\theta}|\bm{Y};\alpha} \left[\dis(\bm{Y}, \bm{g}(\bm{\theta}))\right]}_{(*)}  - \underbrace{\dis(\bm{Y}, \bm{g}(\bm{\theta}))}_{(**)}
\end{eqnarray*}
 of which the sign is not known in advance; each $\bm{\theta}$ defines a direction to follow.
Therefore resolution approaches for dynamical systems can be applied to \eqref{eq:pide} with the constraints of sampling $\bm{\theta}$ in $(*)$ from the power posterior $\pi(\bm{\theta}|\bm{Y};\alpha)$ while $\bm{\theta}$ in $(**)$ is drawn from the prior distribution $\pi(\bm{\theta})$. We adopt a different approach for the solvability in Section \ref{sec:solv}, where  \eqref{eq:pide} is recast in a form that is suitable to address its solvability.
\item Second, we aim to devise computationally tractable solutions of \eqref{eq:pide} in Section \ref{sec:sol}. We address the tractability of the expected deviance that subsequently implies the tractability of the power posterior distributions.  A spectral approach for low dimensional problems with null skewness condition gives an unbiased formulation of the power posterior.
\end{itemize}

\section{Solvability of the characteristic equation} \label{sec:solv} 
This section addresses the existence and the uniqueness of the solution of the transition probability. We start by extending the characteristic equation \eqref{eq:pide} from the diagonal subspace to the entire space $\Theta \times \Theta$. 
The transition probability distribution is then recast as co-factors product for sake of clarity.

\subsection{Reformulation of the characteristic equation}
For sake of conciseness and as there is no confusion, the representation of the data $\bm{Y}$ (it is not a variable in \eqref{eq:pide}) is omitted until Section \ref{sec:sol}. Hence we introduce
\begin{eqnarray*}
\tilde{\pi}(\alpha, \bm{\theta}) &\myeq & \pi(\bm{\theta}|\bm{Y};\alpha),\\
C_0 & \myeq & \dis(\bm{Y}, \bm{g}(\bm{\theta})) \quad \hbox{with} \quad  \bm{\theta}\quad \hbox{drawn from} \quad \tilde{\pi}(\alpha=0, \bm{\theta}) = \pi(\bm{\theta}),\\
\mathcal{K}(\alpha, \bm{\eta},\bm{\eta}) & \myeq &\dis(\bm{Y}, \bm{g}(\bm{\eta}))  \quad \hbox{with} \quad \bm{\eta} \quad \hbox{drawn from} \quad \tilde{\pi}(\alpha, \bm{\theta}).
\end{eqnarray*}
The prior $\pi(\bm{\theta})$ is the initial condition for the dynamics of the power posterior PDF with the tempering variable $\alpha$, therefore the functional $C_0$ reads only the initial state, and does not depend on the power posterior PDF $\tilde{\pi}(\alpha, \bm{\theta})$ for $0<\alpha \leq 1$. By definition, $\mathcal{K}$ is a symmetric kernel, and the expected deviance is recast as
\begin{eqnarray}
\mathbb{E}_{\bm{\theta}|\bm{Y};\alpha} \left[\log p(\bm{Y}|\bm{\theta})\right] = \log(\Cons) -s\int_\Theta \mathcal{K}(\alpha, \bm{\theta},\bm{\theta}) \tilde{\pi}(\alpha, \bm{\theta}) \hbox{d} \bm{\theta},
\end{eqnarray}
therefore the PIDE \eqref{eq:pide}, governing the power posterior distribution $\tilde{\pi}(\alpha, \bm{\theta})$, becomes 
\begin{eqnarray}
\label{eq:pide2}
\left\{
\begin{array}{ll}
 \displaystyle{\frac{\partial \tilde{\pi}(\alpha, \bm{\theta})}{\partial \alpha} = - sC_0 \tilde{\pi}(\alpha, \bm{\theta}) + s\tilde{\pi}(\alpha, \bm{\theta}) \int_\Theta \mathcal{K}(\alpha, \bm{\eta},\bm{\eta}) \tilde{\pi}(\alpha, \bm{\eta}) \hbox{d} \bm{\eta},}
 \\\\
 \displaystyle{\tilde{\pi}(0, \bm{\theta}) = \pi(\bm{\theta}).}
 \end{array}
 \right.
\end{eqnarray}
The above equation is a nonlinear PIDE with nonlinear kernel, and addressing its solvability requires efforts on tackling an extra challenge due to the fact that $\mathcal{K}(\alpha, \cdot,\cdot)$ maps from only the diagonal sub-space of the space $\Theta \times \Theta$.

We address the solvability by extrapolating equation \eqref{eq:pide2} over the whole space $\Theta \times \Theta$. Typically, in lieu of \eqref{eq:pide2}, we consider the following equation 
 \begin{eqnarray}
 \label{eq:pide3}
\left\{
\begin{array}{ll}
 \displaystyle{\frac{\partial \tilde{\pi}(\alpha, \bm{\theta})}{\partial \alpha} = - sC_0 \tilde{\pi}(\alpha, \bm{\theta}) + s\tilde{\pi}(\alpha, \bm{\theta}) \int_\Theta \mathcal{K}(\alpha, \bm{\theta},\bm{\eta}) \tilde{\pi}(\alpha, \bm{\eta}) \hbox{d} \bm{\eta},}
 \\\\
 \displaystyle{\tilde{\pi}(0, \bm{\theta}) = \pi(\bm{\theta}).}
 \end{array}
 \right.
\end{eqnarray}
 To give sense to the kernel $ \mathcal{K}(\alpha, \bm{\theta},\bm{\eta})$ over the space $\Theta \times \Theta$, we assume that the data space, that is a subspace of $\mathbb{R}^q$, is a pseudo-euclidean space, i.e. the projection over each single component of a dataset forms an euclidean line. That leads to the interpretation of
 \begin{eqnarray*}
 \mathcal{K}(\alpha, \bm{\theta},\bm{\eta}) = \langle \bm{Y} - \bm{g}(\bm{\theta}), \bm{Y} - \bm{g}(\bm{\eta}) \rangle,
 \end{eqnarray*}
 where the inner product  $\langle \cdot, \cdot \rangle$ is associated to the quadratic distance $\dis$.

System \eqref{eq:pide2} is a special case of \eqref{eq:pide3} over the diagonal subspace of $\Theta \times \Theta$, then the solvability of \eqref{eq:pide3} implies  the solvability of \eqref{eq:pide2}. Indeed, equation \eqref{eq:pide2} is in the form $ \frac{\partial \tilde{\pi}}{\partial \alpha}(\alpha, \bm{\theta}) = r \tilde{\pi}(\alpha, \bm{\theta})$ while \eqref{eq:pide3}  has the form $ \frac{\partial \tilde{\pi}}{\partial \alpha}(\alpha, \bm{\theta}) = r(\bm{\theta}) \tilde{\pi}(\alpha, \bm{\theta})$, where $\bm{\theta}$ is sampled from $\tilde{\pi}(\alpha, \bm{\theta})$ for $\alpha >0$. 
%
Looking at the functional $\tilde{\pi}(\alpha, \bm{\theta})$ in the form
\begin{eqnarray*}
  \tilde{\pi}(\alpha, \bm{\theta}) = u(\alpha, \bm{\theta}) \exp ( v(\alpha, \bm{\theta})),
\end{eqnarray*}
where for $(\alpha, \bm{\theta}) \in ]0,1] \times \Theta$, it follows that  $u$ and $v$ solve 
\begin{eqnarray}
\label{eq:dif1}
 (a)\left\{
\begin{array}{ll}
\displaystyle{\frac{\partial u (\alpha, \bm{\theta})}{\partial \alpha} = - C_0 u (\alpha, \bm{\theta}),}&  \\\\
\displaystyle{ u(0, \bm{\theta}) = \pi(\bm{\theta}),}
\end{array}
\right.
\quad \quad
(b) \left\{
\begin{array}{ll}
\displaystyle{\frac{\partial v (\alpha, \bm{\theta})}{\partial \alpha} = \int_\Theta \mathcal{H}(\alpha, \bm{\theta}, \bm{\eta}, v(\alpha,\bm{\eta}))  \hbox{d}\bm{\eta},}\\\\
\displaystyle{ v(0, \bm{\theta}) = 0, }
\end{array}
\right.
\end{eqnarray}
respectively, with 
$$\mathcal{H}(\alpha, \bm{\theta}, \bm{\eta}, v(\alpha,\bm{\eta})) = u(\alpha,\bm{\theta}) \mathcal{K}(\alpha, \bm{\theta}, \bm{\eta}) \exp( v(\alpha,\bm{\eta})).$$ 
We have reduced the solvability of \eqref{eq:pide3} to the solvability of (\ref{eq:dif1}b) because (\ref{eq:dif1}a) admits a solution of form $u (\alpha, \bm{\theta}) = \pi(\bm{\theta}) \exp (-\alpha s C_0)$. Equation (\ref{eq:dif1}b) is a particular case of PIDE of Barbashin type \cite{Appell2000}, and is a continuous analogue to countable systems of ordinary differential equations of the form
\begin{eqnarray*}
\frac{d \tilde{v}_i (\alpha)}{ d \alpha} = \sum_{j=1}^{\infty} h_{jj}(\alpha) \tilde{v}_j(\alpha)\tilde{v}_i(\alpha)\quad i=1,2,\cdots,
\end{eqnarray*}
where $\frac{d}{d \alpha}$ is in the Fr\'echet sense. An useful observation from the above equation is that one can deal with the solvability of  (\ref{eq:dif1}b) by referring to it as a class of abstract differential equations in the appropriate Banach space. To do so, we next delineate the framework by setting some basic notions.

\subsection{Basic statements}
To define the space of probability density functions, we assume that $\Theta$ is a product space of $d$ bounded intervals $\Theta = \prod_{i=1}^{d}[a_i,b_i]$, where $a_i,b_i \in \mathbb{R}$. Extension of the marginal support $[a_i, b_i]$ to $\mathbb{R}$ (e.g. for Gaussian density function) is straightforward provided that the density function vanishes at the limits. The space of probability density functions is given by
\begin{eqnarray*}
 \displaystyle{\mathcal{S} = \left\{\mu \in \mathcal{C} \left( \Theta, \mathbb{R}^+\right)\; : \quad \int_\Theta \mu(\bm{\theta}) \hbox{d} \bm{\theta} = 1 \right\},}
\end{eqnarray*}
where $\mathcal{C} \left(\Theta, \mathbb{R}^+\right)$ denotes 
the set of all real positive continuous functions in $\Theta$, and the condition $\int_\Theta \mu(\bm{\theta}) d \bm{\theta} = 1$ implies that $\mu$ is bounded. 
\begin{definition}[Composition laws]
For $r \in \mathbb{R}^+$ and $\mu_1, \mu_2 \in \mathcal{S}$, the inner operator $\oplus : \mathcal{S} \times \mathcal{S} \rightarrow \mathcal{S}$ and the external operator $\odot: \mathbb{R}^+\times \mathcal{S}\rightarrow \mathcal{S}$ given by
\begin{eqnarray*}
\displaystyle{\mu_1 \oplus \mu_2 = \frac{\mu_1( \bm{\theta})\mu_2( \bm{\theta})}{\int_\Theta \mu_1(\bm{\theta})\mu_2( \bm{\theta}) \hbox{d}\bm{\theta}},\quad\quad \hbox{and}\quad \quad  
r \odot \mu_1 = \frac{\mu_1^r( \bm{\theta})}{\int_\Theta \mu_1^r(\bm{\theta}) \hbox{d}\bm{\theta}}}
\end{eqnarray*}
define respectively algebraic composition laws for the perturbation and power transformations in $\mathcal{S}$.
\end{definition}
\begin{proposition}\label{propo3}
The space $\mathcal{S}$ with the perturbation law $\oplus$, the power law $\odot$, and the norm $\|\cdot\|$ induced by the distance $\dis$ is a Banach space of real functions in $\Theta$.
\end{proposition}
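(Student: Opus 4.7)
The plan is to establish three things in order, mirroring the definition of a Banach space: first the algebraic structure of $(\mathcal{S}, \oplus, \odot)$ as a real vector space (in the spirit of the Aitchison geometry on the simplex), then the norm axioms for $\|\cdot\|$, and finally completeness in that norm.

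For the algebraic structure, I would verify that $(\mathcal{S}, \oplus)$ is an abelian group. Commutativity is evident from the symmetric definition of $\oplus$, and associativity follows from a direct calculation in which the nested normalizing constants telescope. The neutral element is the uniform density $\mu_0(\bm{\theta}) = 1/|\Theta|$, whose existence requires $\Theta = \prod_{i=1}^d [a_i,b_i]$ to have finite Lebesgue measure, as assumed in Section~\ref{sec:solv}. The inverse of $\mu \in \mathcal{S}$ is the normalized reciprocal $\mu^{-1}/\int_\Theta \mu^{-1}\hbox{d}\bm{\theta}$, which is well defined because elements of $\mathcal{S}$ are continuous and bounded below by a positive constant on the compact set $\Theta$. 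This forces the extension of $\odot$ from $\mathbb{R}^+$ to $\mathbb{R}$ by setting $(-1)\odot\mu := \mu^{-1}/\int_\Theta \mu^{-1}\hbox{d}\bm{\theta}$. Compatibility axioms, in particular $r\odot(\mu_1\oplus\mu_2) = (r\odot\mu_1)\oplus(r\odot\mu_2)$, $(r+s)\odot\mu = (r\odot\mu)\oplus(s\odot\mu)$, $1\odot\mu = \mu$ and $(rs)\odot\mu = r\odot(s\odot\mu)$, follow by direct manipulation of the exponents under the normalizing integrals.

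Next I would verify the norm axioms: non-negativity with $\|\mu\|=0$ iff $\mu$ coincides with the neutral element $\mu_0$, homogeneity $\|r\odot\mu\| = |r|\,\|\mu\|$, and the triangle inequality $\|\mu_1\oplus\mu_2\| \leq \|\mu_1\| + \|\mu_2\|$. These transfer the quadratic geometry of $\dis$ (later used to build the symmetric kernel $\mathcal{K}$) to a Minkowski-type inequality on $\mathcal{S}$, and the strict positivity of elements of $\mathcal{S}$ makes the null-norm characterization rigorous.

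Completeness is the principal obstacle and where I would spend the most effort. Given a Cauchy sequence $(\mu_n) \subset \mathcal{S}$, I would use a centered log-ratio isometry $\mu \mapsto \log \mu - \frac{1}{|\Theta|}\int_\Theta \log \mu\,\hbox{d}\bm{\theta}$ to transport the sequence into a classical Banach function space on $\Theta$, where completeness is already known. Extracting the limit there and applying the inverse exponential-type map yields a candidate $\mu_\star$. It then remains to verify that $\mu_\star \in \mathcal{S}$: positivity is guaranteed by the exponential, integration to unity by the built-in normalization, and continuity by uniform control of the convergence in the log-image. The convergence $\|\mu_n \ominus \mu_\star\|\to 0$ transfers back through the isometry. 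The delicate point, which I expect to be the main obstacle, is confirming that the log-ratio transform is indeed an isometry between $(\mathcal{S},\|\cdot\|)$ and its target — a compatibility between the norm induced by $\dis$ and the group structure that is specific to quadratic distances and must be checked case by case for the admissible choices of $\dis$ listed after \eqref{eq:likelihood}.
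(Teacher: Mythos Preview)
Your plan is considerably more elaborate than the paper's own proof, which occupies only two sentences. The paper delegates the vector-space structure of $(\mathcal{S},\oplus,\odot)$ to the one-dimensional Aitchison analysis of Egozcue et al.\ (2006), extended to the present $d$-dimensional setup, and then obtains completeness simply by invoking the general fact that every finite-dimensional normed vector space is complete; the norm axioms are not discussed separately. Thus the route you propose for completeness---a centred log-ratio isometry into a classical Banach function space, followed by pulling the limit back to $\mathcal{S}$---does not appear in the paper at all. Your approach is the standard one in the Bayes--Hilbert/Aitchison framework and engages directly with the function-space nature of $\mathcal{S}$; the compatibility issue you flag between the norm induced by $\dis$ and the clr transform is a real technical point that would have to be settled. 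The paper, by contrast, bypasses this entirely through the finite-dimensionality assertion, which makes the argument much shorter but leaves that premise unjustified for a space of continuous densities on $\Theta$. Your detailed unpacking of the group axioms (neutral element $1/|\Theta|$, inverses via strict positivity on the compact $\Theta$, extension of $\odot$ from $\mathbb{R}^+$ to all of $\mathbb{R}$) is precisely the content that the Egozcue citation is meant to supply.
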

\begin{proof}
Extending the one-dimensional analysis developed in \cite{Egozcue2006} to our problem at hand (setup and $d$-dimensional space) shows that $\left(\mathcal{S}, \oplus, \odot\right)$ is a vector space. It is well known that every finite-dimensional normed vector space is a complete, $\mathcal{S}$ equipped with the norm induced by the  distance $\dis$ is then a Banach space. 
\end{proof}
As corollary from Proposition \ref{propo3}, the sub-space $\mathcal{S}_{l}$ of $\mathcal{S}$ given by
\begin{eqnarray*}
  \displaystyle{\mathcal{S}_l = \left\{\mu \in \mathcal{S}  : \quad |\log \mu | < \infty \right\}}
\end{eqnarray*}
is a Banach space, where the completness follows from the concavity of the logarithmic function.

\subsection{Solvability of the Barbashin equation}
It is shown in  \cite{Appell2009, Appell1994, Appell2000, Kalitvin2013} that the solvability of Barbashin equation is given by the strong continuity of the associated abstract integral operator. We now state the solvability result.
\begin{theorem}
\label{theorem2}
Suppose that the forward operator $\bm{g}$ is continuous with respect to the parameter of interest $\bm{\theta}$, then the partial integro-differential equation of Barbashin type (\ref{eq:dif1}b) has a unique solution $v$ that is bounded in the Banach space $\mathcal{S}_l$.
\end{theorem}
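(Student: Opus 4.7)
My plan is to recast the Cauchy problem (\ref{eq:dif1}b) as an abstract evolution equation
\begin{eqnarray*}
v'(\alpha) = F(\alpha, v(\alpha)), \quad v(0) = 0,
\end{eqnarray*}
in the Banach space $\mathcal{S}_l$ furnished by Proposition \ref{propo3}, where the nonlinear operator $F$ is defined by
\begin{eqnarray*}
F(\alpha, v)(\bm{\theta}) = u(\alpha, \bm{\theta}) \int_\Theta \mathcal{K}(\alpha, \bm{\theta}, \bm{\eta}) \exp(v(\bm{\eta})) \, \hbox{d}\bm{\eta}.
\end{eqnarray*}
The strategy is then to invoke the Barbashin-type existence and uniqueness theorem from \cite{Appell2000, Kalitvin2013}, whose hypotheses demand that the abstract integral operator $F(\alpha, \cdot) : \mathcal{S}_l \to \mathcal{S}_l$ be strongly continuous in $\alpha$, uniformly bounded on bounded subsets, and locally Lipschitz in its second argument uniformly over $\alpha \in [0,1]$.

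To verify these hypotheses, I would exploit the compactness of $\Theta = \prod_{i=1}^d [a_i,b_i]$: continuity of $\bm{g}$ and of the inner product $\langle\cdot,\cdot\rangle$ imply that the kernel $\mathcal{K}(\alpha, \bm{\theta}, \bm{\eta}) = \langle \bm{Y} - \bm{g}(\bm{\theta}), \bm{Y} - \bm{g}(\bm{\eta})\rangle$ is jointly continuous and uniformly bounded by some constant $M$. The multiplier $u(\alpha,\bm{\theta}) = \pi(\bm{\theta})\exp(-\alpha s C_0)$, obtained in closed form from (\ref{eq:dif1}a), is likewise continuous and bounded on $[0,1]\times\Theta$. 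The only nontrivial ingredient is the exponential nonlinearity $\exp(v)$; but by definition of $\mathcal{S}_l$, any admissible $v$ satisfies $|v| < \infty$ pointwise, and on every norm ball the map $v\mapsto\exp(v)$ is Lipschitz with constant controlled by the supremum of $\exp(v)$ on that ball. Composing these estimates yields the required continuity and Lipschitz properties of $F$.

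Once these conditions are in place, a Picard--Lindel\"of argument — equivalently, the Barbashin theorem cited above — produces a unique local solution $v \in \mathcal{C}^1([0,\alpha^*];\mathcal{S}_l)$. To extend it to all of $[0,1]$, I would derive an a priori bound on $\|v(\alpha)\|$ from the integral form $v(\alpha,\bm{\theta}) = \int_0^\alpha F(\tau, v(\tau))(\bm{\theta})\,\hbox{d}\tau$ via a Gr\"onwall inequality using the uniform bounds on $u$ and $\mathcal{K}$; uniqueness follows from the same Lipschitz estimate. The main obstacle, in my view, is keeping the iterates inside $\mathcal{S}_l$: since $\mathcal{S}_l$ requires $|\log\tilde\pi|$ to stay finite, I must control $\exp(v)$ both from above \emph{and} from below uniformly in $\alpha$, and the classical Barbashin theory is traditionally formulated for linear or sub-linear kernels rather than exponential nonlinearities. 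The probabilistic interpretation should rescue this: the factorization $\tilde\pi = u\exp(v)$ together with the normalization $\int_\Theta \tilde\pi\,\hbox{d}\bm{\theta}=1$ forces a balancing identity on $v$ that produces the missing lower bound. Plugging this a priori control into the abstract Barbashin framework of \cite{Appell2009, Appell1994, Appell2000, Kalitvin2013} then closes the argument.
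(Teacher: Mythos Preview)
Your proposal follows essentially the same route as the paper: recast (\ref{eq:dif1}b) as an abstract Cauchy problem in the Banach space $\mathcal{S}_l$, identify the right-hand side with an abstract integral (Cauchy) operator, and invoke the Barbashin solvability theory of \cite{Appell2009, Appell1994, Appell2000, Kalitvin2013}, with the continuity of $\bm{g}$ supplying the required continuity of the kernel $\mathcal{H}$. The paper's own argument is in fact terser than yours --- it writes the operator as $A:[0,1]\to\mathcal{L}(\mathcal{S}_l)$, verifies only the strong continuity $\alpha\mapsto A(\alpha)$ via a triangle-inequality estimate on $\|A(\alpha_n)-A(\alpha)\|$, and then defers to the cited references --- so your explicit attention to the local Lipschitz behaviour of $v\mapsto\exp(v)$, the Gr\"onwall extension to all of $[0,1]$, and the two-sided control needed to stay inside $\mathcal{S}_l$ goes beyond what the paper itself records.
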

The basis of the proof, presented in appendix \ref{sec:app1}, is to write out the associated Cauchy operator and to show its strong continuity. For any $\alpha$, we have shown the existence and the uniqueness of the power posterior distribution $\tilde{\pi}(\alpha, \bm{\theta})$ . In the next section, we address the solution formulation.

\section{Solution formulations} \label{sec:sol}
In this section, we re-highlight the dependence on the data $\bm{Y}$, and refer back to the notation $\pi(\bm{\theta}|\bm{Y};\alpha)$ for the power posterior distribution. The PIDE \eqref{eq:pide} has the solution of form
\begin{eqnarray}
\label{eq:solppd}
   \pi(\bm{\theta}|\bm{Y};\alpha) = \pi(\bm{\theta})  p^\alpha (\bm{Y}|\bm{\theta}) \exp\left(-\int_0^\alpha \mathbb{E}_{\bm{\theta}|\bm{Y}; \tau}\left[\log p(\bm{Y}|\bm{\theta}) \right] \hbox{d} \tau \right)\quad \hbox{for}\quad \alpha \in [0,1].
\end{eqnarray}
The power posterior PDF $\pi(\bm{\theta}|\bm{Y};\alpha)$ given at \eqref{eq:solppd} is computationally intractable  as the evaluation of the expected deviance $ \mathbb{E}_{\bm{\theta}|\bm{Y}; \alpha}\left[\log p(\bm{Y}|\bm{\theta}) \right] $ requires the use of the distribution $\pi(\bm{\theta}|\bm{Y};\alpha)$. Another observation from \eqref{eq:solppd} is that the normalizing Bayes factor $z(\bm{Y}|\alpha)$, depends fully on the sum of the expected deviance from the prior stage up to the stage $\alpha$. Our goal in this section is to come up with a setting of $ \mathbb{E}_{\bm{\theta}|\bm{Y}; \alpha}\left[\log p(\bm{Y}|\bm{\theta}) \right] $ that systematically holds back those dependencies to the prior stage. We will do so by devising sampling process, without performing MCMC or using importance sampling, from the prior  only to evaluate $\mathbb{E}_{\bm{\theta}|\bm{Y}; \alpha}\left[\log p(\bm{Y}|\bm{\theta}) \right]$. To tract efficiently the transition distributions along the tempering parameter $\alpha$, we address the computational tractability of the  expected deviance. We propose two approaches, one using the spectral expansion of the kernel $\mathcal{K}$, and another based on the moments generating function of the log-likelihood function.

\subsection{Spectral formulation for low dimensional problems}
\label{sec:spec}
In low-dimensional problems, many numerical approximation approaches, among those the spectral methods, work efficiently and even outperform the Monte Carlo method in some scenarios. In this first formulation, we consider a continuous learning, from the prior stage to any transition stage $\alpha$, $0 < \alpha \leq 1$, of low-dimensional parameter of interest $\bm{\theta}$, and use the symmetric structure of the log-likelihood function (null skewness condition), to approximate the expected deviance as the sum of the eigenvalues of the kernel $\mathcal{K}$. The functional $\mathcal{K}$ is a positive symmetric kernel, thanks to the Mercer theorem \cite{Santin2016}, $\mathcal{K}$ can be written as
\begin{eqnarray}
  \mathcal{K}(\alpha, \bm{\theta}, \bm{\eta}) = \sum_{n=1}^{\infty} \lambda_n(\alpha)k_n^\alpha(\bm{\theta})k_n^\alpha(\bm{\eta}),
\end{eqnarray}
where the eigenvalues $\lambda_n$ are positive, and the eigenfunctions $\{k_n^\alpha\}_{n>0}$ form an orthonormal basis with respect to the power posterior $\pi(\bm{\theta}|\bm{Y};\alpha)$,
\begin{eqnarray}
\label{eq:orth}
\left\langle k^\alpha_n,k^\alpha_m\right\rangle_{\alpha} \myeq \int_\Theta k^\alpha_n(\bm{\theta}) k^\alpha_m(\bm{\theta}) \pi(\bm{\theta}|\bm{Y};\alpha) \hbox{d} \bm{\theta} = \left\{
\begin{array}{ll}
1 \quad \hbox{if}\quad n=m,\\\\
0\quad  \hbox{otherwise}.
\end{array}
\right.
\end{eqnarray} 
Therefore, the expected deviance is seen as the trace of the kernel denoted by $L_1(\alpha)$, and given by
\begin{eqnarray*}
L_1(\alpha) \myeq  \int_\Theta  \mathcal{K}(\alpha, \bm{\eta}, \bm{\eta}) \pi(\bm{\theta}|\bm{Y};\alpha) \hbox{d}\bm{\eta} 
 &=&   \sum_{n=1}^{\infty} \int_\Theta   \lambda_n(\alpha)k^\alpha_n(\bm{\eta})k^\alpha_n(\bm{\eta}) \pi(\bm{\theta}|\bm{Y};\alpha) \hbox{d}\bm{\eta}\\
 &=&  \sum_{n=1}^{\infty} \lambda_n(\alpha) \int_\Theta  k^\alpha_n(\bm{\eta})k^\alpha_n(\bm{\eta}) \pi(\bm{\theta}|\bm{Y};\alpha) \hbox{d}\bm{\eta}\\
 &=&  \sum_{n=1}^{\infty} \lambda_n(\alpha).
\end{eqnarray*} 
The eigenpair $(\lambda_n, k^\alpha_n)$ solves the following eigenvalue problem
\begin{eqnarray} 
\label{eq:eigen}
 \int_\Theta  \mathcal{K}(\alpha, \bm{\theta}, \bm{\eta})k^\alpha_n(\bm{\eta}) \pi(\bm{\eta}|\bm{Y};\alpha) \hbox{d} \bm{\eta} = \lambda_n(\alpha) k^\alpha_n(\bm{\theta}) \quad \hbox{for}\quad n\geq 1.
\end{eqnarray}
Likewise, the trace $L_2$ of the symmetric positive kernel $\mathcal{K}^2$, the Hadamard product of $\mathcal{K}$ by itself, is given by $L_2(\alpha) = \sum_{n=1}^{\infty} \lambda_n^2(\alpha)$.  
The traces $L_1$ and $L_2$ are solutions of
\begin{eqnarray}
\label{eq:kerneltrace}
\begin{array}{lll}
\displaystyle{\frac{\partial  L_1(\alpha)}{\partial \alpha}   = 2s\Big( L_1^2(\alpha) - L_2(\alpha) \Big),} \quad \hbox{and} \quad
\displaystyle{ \frac{\partial  L_2(\alpha)}{\partial \alpha}   = 2sL_1(\alpha)\Big( L_1^2(\alpha) - L_2(\alpha) \Big),}&  
\end{array}
\end{eqnarray}
with initial conditions $L_1(0) = \sum_{n=1}^{\infty} \lambda_n(0)$ and $L_2(0) = \sum_{n=1}^{\infty} \lambda^2_n(0)$, respectively. The derivation of the kernel trace equations \eqref{eq:kerneltrace} is presented in appendix \ref{sec:app3} and is based on the null skewness condition for the log-likelihood function
\begin{eqnarray}
\label{eq:skewcondition}
 \mathbb{E}_{\bm{\theta}|\bm{Y}; \alpha}\left[\Big(\log p(\bm{Y}|\bm{\theta})  - \mathbb{E}_{\bm{\theta}|\bm{Y}; \alpha}\left[\log p(\bm{Y}|\bm{\theta}) \right]\Big)^3 \right] = 0.
\end{eqnarray}
The condition \eqref{eq:skewcondition} results from the exponential structure of the likelihood function with quadratic distance combined with the norm equivalence in finite dimensional space. It is obvious to see when the forward model $\bm{g}$ is linear in the parameter $\bm{\theta}$ or the quadratic distance $\dis$ is associated to a $L^2$ type norm.

From \eqref{eq:kerneltrace}, it comes that $L_2(\alpha) = \frac{1}{2}L_1(\alpha) + constant$. 
Consequently the power posterior distribution has the form
\begin{eqnarray}
 \pi(\bm{\theta}|\bm{Y};\alpha) = \pi(\bm{\theta}) p^\alpha (\bm{Y}|\bm{\theta}) \exp\left( -\alpha \log(\Cons) + \int_0^\alpha \tan(C_1 \tau + C_2) \hbox{d} \tau\right),
\end{eqnarray}
where $C_1$ and $C_2$ are constants, depending on the data dispersion factor $s$, to be estimated at the prior stage. Evaluations of the forward model are needed only at the initial (prior) state to compute the eigenvalues $\lambda_n(0)$ for the approximation of $L_1(0)$, $L_2(0)$, $C_1$ and $C_2$.

\subsection{Solution formulation via moments generating function (MGF)}
\label{sec:solmgf}
The MGF is an useful computational tool when it comes to estimating the moments of a random variable with given distribution. Although the distribution of $\log p(\bm{Y}|\bm{\theta}(\omega))$ is not available to us,  we address the direct approximation of the expected deviance $\mathbb{E}_{\bm{\theta}|\bm{Y}, \alpha}\left[\log p(\bm{Y}|\bm{\theta}) \right]$ using MGF. Let $\Phi_n$ be the $n^{th}$ moment of the log-likelihood with respect to the power posterior distribution,
\begin{eqnarray}
 \Phi_n(\alpha) = \mathbb{E}_{\bm{\theta}|\bm{Y}; \alpha}\left[\log^n p(\bm{Y}|\bm{\theta}) \right].
\end{eqnarray}
By definition of the likelihood function \eqref{eq:likelihood}, $\Phi_n$ exists and has $\mathcal{C}^{\infty}$ regularity as the property, of interchanging the order of the differentiation, with respect to $\alpha$, with the integration with respect to $\bm{\theta}$, holds, see Lemma 3.1 \cite{Heng2016}. Differentiating $\Phi_n(\alpha)$ with respect to $\alpha$, it follows that 
\begin{eqnarray}
\label{eq:moments}
 \frac{\partial \Phi_n(\alpha)}{\partial \alpha} = \Phi_{n+1}(\alpha) - \Phi_1(\alpha) \Phi_n(\alpha),
\end{eqnarray}
with a given initial state $\Phi_n(0) = \mathbb{E}\left[\log^n p(\bm{Y}|\bm{\theta}) \right] $. The expected deviance $\mathbb{E}_{\bm{\theta}|\bm{Y}; \alpha}\left[\log p(\bm{Y}|\bm{\theta}) \right]$ corresponds to  $\Phi_1$, solution of a system of nested differential equations, where the $(n+1)^{th}$ moment operates as an input in the governing equation of the $n^{th}$ moment. For the spectral setting at Section \ref{sec:spec}, the nested system is closed at rank $n=3$ with the null skewness condition \eqref{eq:skewcondition}, and the resulting curve of the expected deviance maps the prior and the posterior stages. Here, we count for all the moments, and provide a stepwise presentation for sake of clarity.

\subsubsection{Governing equation of the MGF}
In this first step, we look at the log-likelihood as a random variable with randomness inherited from $\bm{\theta}$ only (as we assume that $\bm{Y}$ is given), then the $n^{th}$ order of its expected value satisfies
\begin{eqnarray}
 \frac{\partial \Phi_n(\alpha)}{\partial \alpha} = \Phi_{n+1}(\alpha) - \Phi_1(\alpha) \Phi_n(\alpha),
\end{eqnarray}
for $n\geq 0$, and $\alpha \in ]0,1]$, with $\Phi_0(\alpha) =1$.  For a positive real $T_\beta$ large enough ($T_\beta \approx 1$), we have
\begin{eqnarray*}
 \frac{\beta^n}{n!}\frac{\partial \Phi_n(\alpha)}{\partial \alpha} = \frac{\beta^n}{n!} \Phi_{n+1}(\alpha) - \Phi_1(\alpha) \frac{\beta^n}{n!} \Phi_n(\alpha) \quad \forall n \geq 0 \quad \hbox{and} \quad \beta \in [0,T_\beta].
\end{eqnarray*}
We sum over $n$ up to infinity to obtain
\begin{eqnarray*}
 \sum_{n=0}^{\infty}\frac{\beta^n}{n!}\frac{\partial \Phi_n(\alpha)}{\partial \alpha} = \sum_{n=0}^{\infty}\frac{\beta^n}{n!} \Phi_{n+1}(\alpha) - \Phi_1(\alpha) \sum_{n=0}^{\infty}\frac{\beta^n}{n!} \Phi_n(\alpha).
\end{eqnarray*}
Let $m$ be the MGF of the log-likelihood, it is given by
\begin{eqnarray*}
m(\alpha,\beta) =  \sum_{n=0}^{\infty}\frac{\beta^n}{n!}\Phi_n(\alpha),
\end{eqnarray*}
from which it follows that
\begin{eqnarray*}
 \frac{\partial m(\alpha, \beta)}{\partial \alpha}  
 &=&  \frac{\partial }{\partial \beta} \left( \sum_{n=0}^{\infty}\frac{\beta^{n+1}}{(n+1)!} \Phi_{n+1}(\alpha) \right) - \Phi_1(\alpha) m(\alpha, \beta)\\
& =&  \frac{\partial }{\partial \beta} \left( \sum_{n=0}^{\infty}\frac{\beta^n}{n!} \Phi_{n}(\alpha) - \frac{\beta^0}{0!} \Phi_{0}(\alpha) \right) - \Phi_1(\alpha) m(\alpha, \beta)\\
& = &   \frac{\partial }{\partial \beta} \left( \sum_{n=0}^{\infty}\frac{\beta^n}{n!} \Phi_{n}(\alpha) \right)   - \Phi_1(\alpha) m(\alpha, \beta).
\end{eqnarray*}
The governing equation of the MGF is the following first order partial differential equation (PDE)
\begin{eqnarray}
\label{eq:mgf1}
\frac{\partial m(\alpha, \beta)}{\partial \alpha} = \frac{\partial m(\alpha, \beta)}{\partial \beta} - \Phi_1(\alpha) m(\alpha,\beta), \quad \hbox{for} \quad (\alpha, \beta) \in ]0,1] \times ]0,T_\beta].
\end{eqnarray}
Equation \eqref{eq:mgf1} is a nonlinear advection-reaction equation that has to be equipped with a suitable boundary condition either at $\alpha=0$, $\alpha =1$, or $\beta =0$ to form a wellposed initial-value problem (IVP). Subsequently, applying the method of characteristics suffices to come up with the unique solution $m$.

\subsubsection{Admissible initial conditions}
\label{sec:adic}
In this step, we enumerate all the initial conditions that can be assigned to \eqref{eq:mgf1} and aligned with our objective of not-sampling from any transition distribution, even less from the posterior one. Typically, we interpret the properties of MGF to arrive at the following conditions:
\begin{itemize}
\item{Initial value of MGF at $\beta=0$:}
\begin{eqnarray*}
 m(\alpha,0) = \Phi_0(\alpha) = 1.
 \end{eqnarray*}
\item{Initial value of MGF at the prior stage:}
\begin{eqnarray*}
 m(0, \beta) &=& \sum_{n=0}^{\infty}\frac{\beta^n}{n!}\Phi_n(0)
  =  \sum_{n=0}^{\infty}\frac{\beta^n}{n!}\mathbb{E}_{\bm{\theta}|\bm{Y}; \alpha=0}\left[\log^n p(\bm{Y}|\bm{\theta}) \right]
      = \sum_{n=0}^{\infty}\frac{\beta^n}{n!}\mathbb{E}\left[\log^n p(\bm{Y}|\bm{\theta}) \right]\\
      &=& \sum_{n=0}^{\infty}\mathbb{E}\left[\frac{\beta^n}{n!}\log^n p(\bm{Y}|\bm{\theta}) \right]
      = \sum_{n=0}^{\infty}\mathbb{E}\left[\frac{\left(\beta\log p(\bm{Y}|\bm{\theta})\right)^n}{n!} \right].
\end{eqnarray*}
Let $X_n(\omega) =\frac{\left(\beta\log p(\bm{Y}|\bm{\theta}(\omega))\right)^n}{ n! }$ with $\omega \in \Omega$, as we consider a single observation setting for the likelihood, the inequality $|X_n| \leq \frac{\beta^n}{n!}$ holds almost everywhere, for all $n$. It is obvious that the series of general terms $\frac{\beta^n}{n !}$ is a Taylor series, and obviously convergent. Consequently, we can interchange the integration $\mathbb{E}$ and the summation over non-negative integer $n$ of the mesurable function $X_n$, that is
\begin{eqnarray*}
    m(0, \beta)  = \mathbb{E}\left[ \sum_{n=0}^{\infty} \frac{\left(\beta \log p(\bm{Y}|\bm{\theta})\right)^n}{n!} \right] = \mathbb{E}\left[ \exp \left(\beta \log p(\bm{Y}|\bm{\theta})\right) \right]=\mathbb{E}\left[ p^\beta(\bm{Y}|\bm{\theta}) \right].
\end{eqnarray*}
\end{itemize}
Looking for the initial condition at $\alpha=1$ results in $m(1,\beta)=\mathbb{E}_{\bm{\theta}|\bm{Y}; \alpha=1}\left[ p^\beta(\bm{Y}|\bm{\theta}) \right]$ and requires sampling from the posterior PDF, which we aim to avoid in this work.  Roughly speaking, we shall deal with two candidates $m(\alpha,0)=1$ and $m(0,\beta)= \mathbb{E}\left[ p^\beta(\bm{Y}|\bm{\theta}) \right]$ as initial condition to complete the MGF equation.

\subsubsection{Solving the MGF equation}
We now investigate the analytical resolution of the MGF equation \eqref{eq:mgf1} in the domain $]0,1] \times ]0,+\infty[$
with initial condition $m(\alpha,0) = 1$ or $m(0,\beta) = \mathbb{E}\left[ p^\beta(\bm{Y}|\bm{\theta}) \right]$.  We consider the function $F$, for the change of variable, given by
\begin{eqnarray*}
 F(\alpha, \beta) = m(\alpha, \beta)\exp \left(\int_0^\alpha \Phi_1(\tau) \hbox{d} \tau \right).
\end{eqnarray*}
Differentiating with respect to $\alpha$ and to $\beta$ (by applying the Leibniz integral rule) gives respectively
\begin{eqnarray*}
\frac{\partial F}{\partial \alpha}  &=& \frac{\partial m}{\partial \alpha} \exp \left(\int_0^\alpha \Phi_1(\tau) \hbox{d} \tau \right) + \left(\frac{\partial }{\partial \alpha} \int_0^\alpha \Phi_1(\tau) \hbox{d} \tau \right) m(\alpha, \beta)  \exp \left(\int_0^\alpha \Phi_1(\tau) \hbox{d} \tau \right)\\
&=& \frac{\partial m}{\partial \alpha} \exp \left(\int_0^\alpha \Phi_1(\tau) \hbox{d} \tau \right) +  \Phi_1(\alpha)  m(t, \alpha)  \exp \left(\int_0^\alpha \Phi_1(\tau) \hbox{d} \tau \right)\\
&=& \left(\frac{\partial m}{\partial \alpha}  +  \Phi_1(\alpha)  m(\alpha, \beta) \right) \exp \left(\int_0^\alpha \Phi_1(\tau) \hbox{d} \tau \right),
\end{eqnarray*}
and
\begin{eqnarray*}
\frac{\partial F}{\partial \beta} = \frac{\partial m}{\partial \beta} \exp \left(\int_0^\alpha \Phi_1(\tau) \hbox{d} \tau \right).
\end{eqnarray*}
Plugging the MGF equation \eqref{eq:mgf1}, it is seen that $F$ statifies the simpler equation
\begin{eqnarray}
\label{eq:change_mgf}
 \frac{\partial F}{\partial \alpha} - \frac{\partial F}{\partial \beta} = 0,
\end{eqnarray}
that has to be equipped with initial condition associated with one of the admissible initial conditions of \eqref{eq:mgf1} discussed in \ref{sec:adic}. 

We examine the initial condition to mark out a well-posed IVP governing $F$. The differential equation \eqref{eq:change_mgf} with initial condition  $F_0(\alpha)$, given by
\begin{eqnarray*}
F_0(\alpha) &=& F(0,\alpha) \\
            &=& m(\alpha, 0)\exp \left(\int_0^\alpha \Phi_1(\tau) \hbox{d} \tau \right) \\
            &=& \exp \left(\int_0^\alpha \Phi_1(\tau) \hbox{d} \tau \right),
\end{eqnarray*}
gives an ill-posed IVP in the Hadamard sense as it admits an infinity of solutions. However, the differential equation \eqref{eq:change_mgf} with initial condition $F^0(\beta)$, given by
\begin{eqnarray*}
F^0(\beta) &=& m(0, \beta)
\end{eqnarray*}
has an unique solution of the form $F(\alpha, \beta) = m(0,\beta+\alpha)$. From now on, we drop the initial condition $m(\alpha, 0) =1$ and the MGF equation refers to \eqref{eq:mgf1} and the initial condition $m(0,\beta) = \mathbb{E}\left[ p^\beta(\bm{Y}|\bm{\theta}) \right]$, which together have a unique solution given by
\begin{eqnarray}
 m(\alpha, \beta) = \mathbb{E}\left[ p^{\alpha+\beta}(\bm{Y}|\bm{\theta}) \right] \exp \left(-\int_0^\alpha \Phi_1(\tau) \hbox{d} \tau \right).
\end{eqnarray}

\subsubsection{Tractable formulation of the expected deviance}
By definition of the MGF $m$, the expected deviance $\Phi_1$ is the derivative of $m$ with respect to $\beta$ evaluated at $\beta=0$,  
\begin{eqnarray*}
\Phi_1(\alpha) &=& \frac{\partial m(\alpha, \beta)}{\partial \beta}\Big|_{\beta=0}  \\
 &=&  \frac{\partial \mathbb{E}\left[ p^{\alpha+\beta}(\bm{Y}|\bm{\theta}) \right]}{\partial \beta}  \Big|_{\beta=0}  \, \exp \left(-\int_0^\alpha \Phi_1(\tau) \hbox{d} \tau \right)\\
 &=& \mathbb{E}\left[ \log p(\bm{Y}|\bm{\theta}) \,p^{\alpha+\beta}(\bm{Y}|\bm{\theta}) \right] \Big|_{\beta=0} \exp \left(- \int_0^\alpha \Phi_1(\tau) \hbox{d} \tau \right)\\
 &=& \mathbb{E}\left[ \log p(\bm{Y}|\bm{\theta}) \,p^{\alpha}(\bm{Y}|\bm{\theta}) \right]\exp \left(- \int_0^\alpha \Phi_1(\tau) \hbox{d} \tau \right).
\end{eqnarray*}
We let $h$ denote the tractile function depending on the tempering variable
\begin{eqnarray*}
h(\alpha) =  \mathbb{E}\left[ \log p(\bm{Y}|\bm{\theta}) \,p^{\alpha}(\bm{Y}|\bm{\theta}) \right],
\end{eqnarray*}
the expected deviance $\Phi_1$ is therefore solution of the nonlinear integral equation
\begin{eqnarray}
\label{eq:ed1}
\Phi_1(\alpha)= h(\alpha) \exp \left(- \int_0^\alpha \Phi_1(\tau) \hbox{d} \tau \right).
\end{eqnarray}
At this stage, numerical methods for solving integral equations can be applied to \eqref{eq:ed1} to compute efficiently  the expected deviance in a tempering grid, $0=\alpha_0 \leq \alpha_1 \leq \cdots \leq \alpha_{N_{_\alpha}}=1$, of $N_{_\alpha}+1$ points. Nevertheless, we push forward towards a closed form of $\Phi_1$
by differentiating \eqref{eq:ed1} with respect to the tempering variable, by writing
\begin{eqnarray*}
\frac{\partial \Phi_1}{\partial \alpha}(\alpha) 
&=& \frac{\partial h(\alpha)}{\partial \alpha}\exp \left(-\int_0^\alpha \Phi_1(\tau) \hbox{d} \tau \right) -  h(\alpha) \Phi_1(\alpha) \exp \left(-\int_0^\alpha \Phi_1(\tau) \hbox{d} \tau \right)\\
&=& \frac{\partial h(\alpha)}{\partial \alpha}\frac{1}{h(\alpha)} \left( h(\alpha) \exp \left(- \int_0^\alpha \Phi_1(\tau) \hbox{d} \tau \right)\right) -  \Phi_1(\alpha) \left( h(\alpha)  \exp \left(- \int_0^\alpha \Phi_1(\tau) \hbox{d} \tau \right) \right)\\
&=& \frac{\partial h(\alpha)}{\partial \alpha}\frac{1}{h(\alpha)}  \Phi_1(\alpha)  -  \Phi_1(\alpha) \Phi_1(\alpha)  \\
&=&  \frac{\partial h(\alpha)}{\partial \alpha}\frac{1}{h(\alpha)}  \Phi_1(\alpha)  -  \Phi_1^2(\alpha).
\end{eqnarray*}
We find that the expected deviance is governed by a differential Bernouilli equation of second-type given by
\begin{eqnarray}
\left\{
\begin{array}{lll}
\Phi_1'(\alpha) =\frac{h'(\alpha)}{h(\alpha)}   \Phi_1(\alpha)  - \Phi_1^2(\alpha)\quad \hbox{for}\quad 0<\alpha\leq 1,\\\\
\Phi_1(0) = \mathbb{E}\left[ \log p(\bm{Y}|\bm{\theta}) \right],
\end{array}
\right.
\end{eqnarray}
with solution in the form
\begin{eqnarray*}
\Phi_1(\alpha) = \frac{h(\alpha)}{1 + \displaystyle{ \int_0^\alpha h(\tau) \hbox{d} \tau }}.
\end{eqnarray*}
To gain clarity, we summarize the analysis presented in Section \ref{sec:solmgf} in the following theorem.
\begin{theorem}
The power-posterior distribution $\pi(\bm{\theta}|\bm{Y};\alpha)$ exists and is given by
\begin{eqnarray}
\label{eq:soltract1}
   \pi(\bm{\theta}|\bm{Y};\alpha) = \pi(\bm{\theta})  p^\alpha (\bm{Y}|\bm{\theta}) \exp\left(-\int_0^\alpha \mathbb{E}_{\bm{\theta}|\bm{Y}; \tau}\left[\log p(\bm{Y}|\bm{\theta}) \right] \hbox{d} \tau \right)\quad \hbox{for}\quad \alpha \in [0,1],
\end{eqnarray}
where the expected deviance is 
\begin{eqnarray}
\label{eq:tractE}
\mathbb{E}_{\bm{\theta}|\bm{Y}; \alpha}\left[\log p(\bm{Y}|\bm{\theta}) \right] = \frac{h(\alpha)}{1 + \displaystyle{ \int_0^\alpha h(\tau) \hbox{d} \tau }}.
\end{eqnarray}
\end{theorem}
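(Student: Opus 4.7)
The plan is to combine two ingredients that are already largely assembled in the preceding text. The representation \eqref{eq:soltract1} falls out by reading the characteristic PIDE \eqref{eq:pide} as a pointwise linear ODE in $\alpha$: at each fixed $\bm{\theta}$, the map $\alpha \mapsto \pi(\bm{\theta}|\bm{Y};\alpha)$ satisfies $\partial_\alpha \pi = \big(\log p(\bm{Y}|\bm{\theta}) - \Phi_1(\alpha)\big)\pi$ with $\pi(\bm{\theta}|\bm{Y};0)=\pi(\bm{\theta})$, where only the second term depends on the auxiliary integration variable. Direct integration yields $\pi(\bm{\theta}|\bm{Y};\alpha) = \pi(\bm{\theta})\exp(\alpha\log p(\bm{Y}|\bm{\theta}))\exp\big(-\int_0^\alpha \Phi_1(\tau)\,\mathrm{d}\tau\big)$, which is precisely \eqref{eq:soltract1}. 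Existence and uniqueness of this $\pi(\bm{\theta}|\bm{Y};\alpha)$ in $\mathcal{S}_l$ is handed to us by Theorem \ref{theorem2}.

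The substance of the proof is therefore \eqref{eq:tractE}. I would assemble the MGF construction from Section \ref{sec:solmgf} in the following order. First, derive the tower identity $\partial_\alpha \Phi_n = \Phi_{n+1}-\Phi_1\Phi_n$ by direct differentiation with Leibniz's rule, multiply by $\beta^n/n!$, and sum to obtain the advection--reaction PDE \eqref{eq:mgf1} governing $m(\alpha,\beta)$, after checking uniform convergence of the series on $[0,1]\times[0,T_\beta]$. Second, certify the prior-stage initial datum $m(0,\beta) = \mathbb{E}[p^\beta(\bm{Y}|\bm{\theta})]$ by interchanging expectation and Taylor summation through dominated convergence, which applies because $\log p(\bm{Y}|\bm{\theta})$ is bounded on $\Theta$ under the standing continuity of $\bm{g}$ and compactness of $\Theta$. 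Third, apply the change of variable $F(\alpha,\beta) = m(\alpha,\beta)\exp\big(\int_0^\alpha \Phi_1(\tau)\,\mathrm{d}\tau\big)$ to reduce \eqref{eq:mgf1} to the pure transport $\partial_\alpha F = \partial_\beta F$, integrate along the characteristics $\alpha+\beta=\mathrm{const}$, and read off $m(\alpha,\beta) = \mathbb{E}[p^{\alpha+\beta}(\bm{Y}|\bm{\theta})]\exp\big(-\int_0^\alpha \Phi_1(\tau)\,\mathrm{d}\tau\big)$.

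From the defining identity $\Phi_1(\alpha) = \partial_\beta m(\alpha,\beta)\big|_{\beta=0}$ and the notation $h(\alpha) = \mathbb{E}[\log p(\bm{Y}|\bm{\theta})\,p^\alpha(\bm{Y}|\bm{\theta})]$, one reads off the implicit integral equation $\Phi_1(\alpha) = h(\alpha)\exp\big(-\int_0^\alpha \Phi_1(\tau)\,\mathrm{d}\tau\big)$. Differentiating once more in $\alpha$ converts it into the Bernoulli ODE $\Phi_1' = (h'/h)\Phi_1 - \Phi_1^{\,2}$, which I linearize by the standard substitution $u = 1/\Phi_1$. This substitution reduces the equation to $(hu)' = h$; integrating from $0$ to $\alpha$ with the initial condition $u(0) = 1/h(0)$, where $h(0) = \Phi_1(0) = \mathbb{E}[\log p(\bm{Y}|\bm{\theta})]$, yields $h(\alpha)u(\alpha) = 1 + \int_0^\alpha h(\tau)\,\mathrm{d}\tau$, which is exactly \eqref{eq:tractE} after inverting.

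The step I expect to be most delicate is not any of these calculations but the regularity and non-degeneracy conditions needed to legitimize them: I must verify that the series defining $m$ has positive radius of convergence at $\beta=0$ so that termwise differentiation is valid, and that $h(\alpha)\neq 0$ throughout $[0,1]$ so both the Bernoulli normal form and the substitution $u=1/\Phi_1$ are well-defined. Both reduce to boundedness of $\mathrm{\bf dis}(\bm{Y},\bm{g}(\bm{\theta}))$ on $\Theta$, which follows from the exponential form of the likelihood \eqref{eq:likelihood} together with the continuity of $\bm{g}$ invoked in Theorem \ref{theorem2}; I would record this as a short lemma before beginning the derivation.
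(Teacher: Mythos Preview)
Your proposal is correct and follows essentially the same route as the paper: the representation \eqref{eq:soltract1} is read directly from the PIDE \eqref{eq:pide} (this is exactly \eqref{eq:solppd}), and the derivation of \eqref{eq:tractE} recapitulates Section~\ref{sec:solmgf} step by step---the moment recursion \eqref{eq:moments}, the advection--reaction PDE \eqref{eq:mgf1} with initial datum $m(0,\beta)=\mathbb{E}[p^\beta(\bm{Y}|\bm{\theta})]$, the change of variable $F$ reducing it to pure transport, the integral equation \eqref{eq:ed1}, and finally the Bernoulli ODE. Your explicit linearization $(hu)'=h$ via $u=1/\Phi_1$ is a clean way to carry out the last integration that the paper leaves implicit, and your flagging of the side conditions (radius of convergence of the MGF series, non-vanishing of $h$) is a welcome addition the paper does not address---though note that boundedness of $\dis(\bm{Y},\bm{g}(\bm{\theta}))$ alone does not force $h(\alpha)\neq 0$, so that point would need a separate sign argument.
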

 Computing the power posterior distribution demands forward evaluations only for approximating $\Phi_1(0)$. The tractability, through  \eqref{eq:tractE}, provides the expected deviance  $\mathbb{E}_{\bm{\theta}|\bm{Y}; \alpha}\left[\log p(\bm{Y}|\bm{\theta}) \right]= \Phi_1(\alpha)$ at any stage without any extra evaluations of the forward model $\bm{g}$. In the next section, we justify the setting of \eqref{eq:tractE} in term of computational stability and perform numerical experiments.

\section{Numerical results}\label{sec:num}
We conduct numerical results to assess the applicability and the practical aspect of our approach of computing the transition distributions while holding back evaluations of the forward model at the prior stage. Notations $\mathcal{U}$, $\mathcal{N}$ and $\Gamma$ refer to the uniform, normal and Gamma distributions, respectively. We present three numerical examples, the first is a one-dimensional linear model with Gaussian measurement errors where an explicit form of the expected deviance and of the tractile function $h$ allow to perform a detailed error analysis. The second example deals with a source inversion problem where the measurement errors are in a more realistic setting and are incorporated in the likelihood function  using the Wasserstein distance. The third example is a challenging problem of multimodal inversion of a two-dimensional parameter.

\subsection{Scaling domain and implementation}
In the case of no interest in the transition distributions, the integral at \eqref{eq:solppd} sums from $0$ to $1$ and can be casted as the expectation, with respect to the joint distribution $\pi(\bm{\theta}, \alpha)$ of $\left(\bm{\theta}, \alpha\right)$, of the ratio of the log-likelihood over the distribution of the tempering parameter $\mathbb{E}_{\bm{\theta}, \alpha|\bm{Y}} \left[\frac{\log p(\bm{Y}|\bm{\theta})}{p(\alpha)}\right]$.  The total computational bias of that formulation copies only the model bias, generated for evaluating the forward model $\bm{g}$, but requires the distribution $p(\alpha)$ as given. The design of such unbiased estimators with MCMC methods, to compute the normalizing constant, has been addressed in  \cite{Jacob2019, Rischardetal2018}.

In the second instance, one would look at the expected deviance as the ratio of the expected values with respect to the prior PDF; $\Phi_1(\alpha) = \frac{\mathbb{E} \left[ p^\alpha(\bm{Y}|\bm{\theta}) \log p(\bm{Y}|\bm{\theta})\right]}{\mathbb{E}\left[ p(\bm{Y}|\bm{\theta})\right]}$. As the numerator and the denominator have different scales of the likelihood, that formulation is likely unstable numerically because its ratio scale promotes numerical underflow even for small likelihood function ($p(\bm{Y}|\bm{\theta}) \leq 1$).

\begin{figure}[H]
\centering
\includegraphics[width=0.50\textwidth]{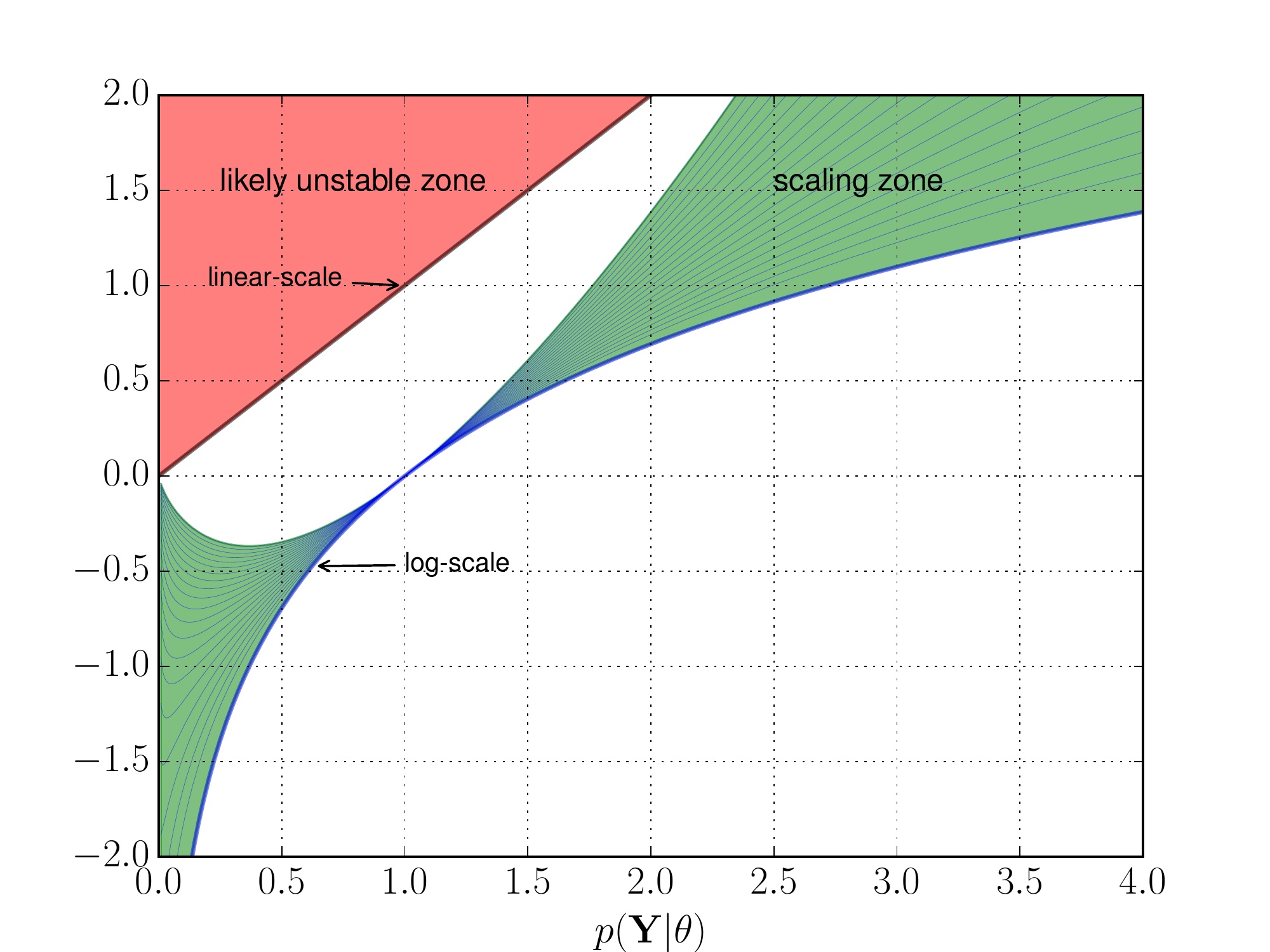}
\caption{scaling domain}
\label{fg:scaling}
\end{figure}
A crucial asset of the thermodynamic identity is that the expected deviance is on the logarithmic scale of the likelihood, which provides an unconditional numerical stability. That feature has been explored to advance MCMC methods in many ways in devising paths to access the posterior PDF (or in other terms in approximating the Bayes factor), see for instance \cite{GelmanMeng1998, Neal1996}. We believe that the logarithmic scale remains the best in terms of numerical stability despite the intractability issue and the intensive computational cost when it comes to profile transition distributions.
One main purpose of this work is to avoid evaluations of the forward model $\bm{g}$ at transition stages, that subsequently induces the computational tractability. We also seek to preserve the numerical stability in the computation of the expected deviance. Rather than adopting a fixed scale of the likelihood function, our approach sweeps a scaling zone when computing the expected deviance. The scale changes from the log scale of the likelihood to the linear-log scale of the likelihood when $\alpha$ varies from $0$ to $1$. Although we fabricate our approach wih the assymption of $p(\bm{Y}|\bm{\theta}) \leq 1$ at a number of steps in the theoretical derivation, it is worth to mention that when $\alpha$ approaches $1$ with large likelihood function ($p(\bm{Y}|\bm{\theta}) \gg 1$), the ratio scale of the expected deviance reduces drastically the risk of numerical instability due to the fact that the numerator and the denominator have the same scale in the likelihood function. In Figure \ref{fg:scaling}, we schematize the scaling zone sweeping by the computation of the expected deviance $\Phi_1(\alpha)$ for $\alpha \in [0,1]$, the likely unstable zone, and the log scale. 
\begin{center}
\begin{minipage}{0.59\textwidth}
\footnotesize{
\begin{algorithm}[H]\label{algo1}
\DontPrintSemicolon 
  \KwInput{\vspace{-0.5cm}
   \begin{align*}
   &  \hbox{Data} \quad \bm{Y}\\
   & \hbox{forward model} \quad \bm{g}\\
   & \hbox{tempering gridpoints} \quad \{\alpha_0, \alpha_1, \cdots, \alpha_{N_{_{\tiny{\alpha}}}} \}
\end{align*}  
  }
  \vspace{-0.3cm}
  \KwOutput{$\Phi_1(\alpha_k)$ for $k=0,\cdots,{N_{_{\tiny{\alpha}}}}$} 
  Generate  $\bm{\theta}_1, \cdots, \bm{\theta}_{_N}$ from $\pi(\bm{\theta})$ \\ 
  Evaluate $\bm{g}(\bm{\theta}_1), \cdots, \bm{g}(\bm{\theta}_{_N})$ \\ 
  Evaluate $\log p(\bm{Y}|\bm{\theta}_n)$ for $n=1,\cdots,N$\\ 
  Approximate $h(\alpha_0) := \mathbb{E}\left[\log p(\bm{Y}|\bm{\theta}) \right]$, set $\hbar(\alpha_0) := 0$ \\ 
  \For{k in $\{1, \cdots,  {N_{_{\tiny{\alpha}}}} \}$} 
  {
     Evaluate $p^{\alpha_k}(\bm{Y}|\bm{\theta}_n)$ for $n=1,\cdots,N$\\ 
     Approximate $h(\alpha_k) := \mathbb{E}\left[p^{\alpha_k}(\bm{Y}|\bm{\theta}) \log p(\bm{Y}|\bm{\theta}) \right]$\\ 
     Approximate $\hbar(\alpha_k) := \hbar(\alpha_{k-1}) + \displaystyle{ \int_{\alpha_{k-1}}^{\alpha_k}h(\tau) \hbox{d} \tau }$\\ 
     Evaluate $\Phi_1(\alpha_k) := \displaystyle{\frac{h(\alpha_k)}{1+\hbar(\alpha_k)}}$
  }
\caption{\footnotesize{Computation of $\Phi_1$}}
\end{algorithm}}
\end{minipage}
\end{center}
The Algorithm \ref{algo1} summarizes the numerical approximation of the expected deviance for a given grid $\left(\alpha_k\right)_{0\leq k \leq N_\alpha} $. Only $N$ evaluations of $\bm{g}$ at the prior stage suffice to compute the expected deviance $\Phi_1$ at any stage $\alpha_k$, including the posterior one ($k = N_\alpha$). That induces the tractability of the power posterior PDFs. In this work, the expected values at steps 4 and 7 in algorithm \ref{algo1} are approximated using the Monte Carlo sampling, while the Simpson quadrature is employed for the local integration, from $\alpha_{k-1}$ to $\alpha_k$, of the tractile function $h$ at step 8. Next, we present numerical experiments that cover only the method bias (bias generated by the tempering quadrature at step 8 in algorithm \ref{algo1}) since the evaluation of the forward model $\bm{g}$, in each of the three examples,  does not require a numerical discretization. 

\subsection{Example 1: A linear algebraic model}
We consider an algebraic one-dimensional linear model, where the expected deviance $\Phi_1$ and the tractile function $h$ can be computed analytically, given by
\begin{equation}
\label{eq:linearmodel}
y = A\theta + \epsilon.
\end{equation}
Here our prior belief in the parameter of interest is characterized by a normal PDF,  $\theta \sim \mathcal{N}\left(1, \sigma_p^2\right)$ and we consider that the observational noise for the data is Gaussian as well, i.e. $\epsilon \sim \mathcal{N}\left(0, \sigma_\epsilon^2\right)$, which leads to a likelihood function of the form $p(y|\theta) = (2\pi \sigma_\epsilon^2)^{-1/2} \exp \left(-\frac{1}{2}\frac{(y - A\theta)^2}{\sigma_\epsilon^2}\right)$. Let $\sigma_\alpha^2 = A^2 \sigma_p^2\alpha + \sigma_\epsilon^2$, we find that the expected deviance and the tractile function are, respectively, given by
\begin{eqnarray*}
 \mathbb{E}_{\theta \vert y; \alpha} \left[ \log p(y \vert \theta) \right]  &=& -\frac{1}{2} \left( \log \left(2\pi \sigma_\epsilon^2\right) + \frac{A^2 \sigma_p^2}{\sigma_\alpha^2} + \frac{\left(y - A\right)^2 \sigma_\epsilon^2}{\sigma_\alpha^4}  \right),\\
 h(\alpha) &=& -\frac{1}{2}  \left(2\pi \sigma_\epsilon^2\right)^{-\alpha/2}    \left( \log \left(2\pi \sigma_\epsilon^2\right) + \frac{A^2 \sigma_p^2}{\sigma_\alpha^2} + \frac{\left(y - A\right)^2 \sigma_\epsilon^2}{\sigma_\alpha^4}  \right)  \exp\left(-\frac{\alpha}{2} \frac{ (y- A)^2}{\sigma_\alpha^2}\right).
 \end{eqnarray*}
The synthetic data sample is acquired from the "\textit{true}" value $\theta_t = 1.1$ with measurement errors diffusion coefficient of $\sigma_\epsilon^2 = 4.0$.
\begin{figure}[H]
\centering
\includegraphics[width=0.7\textwidth]{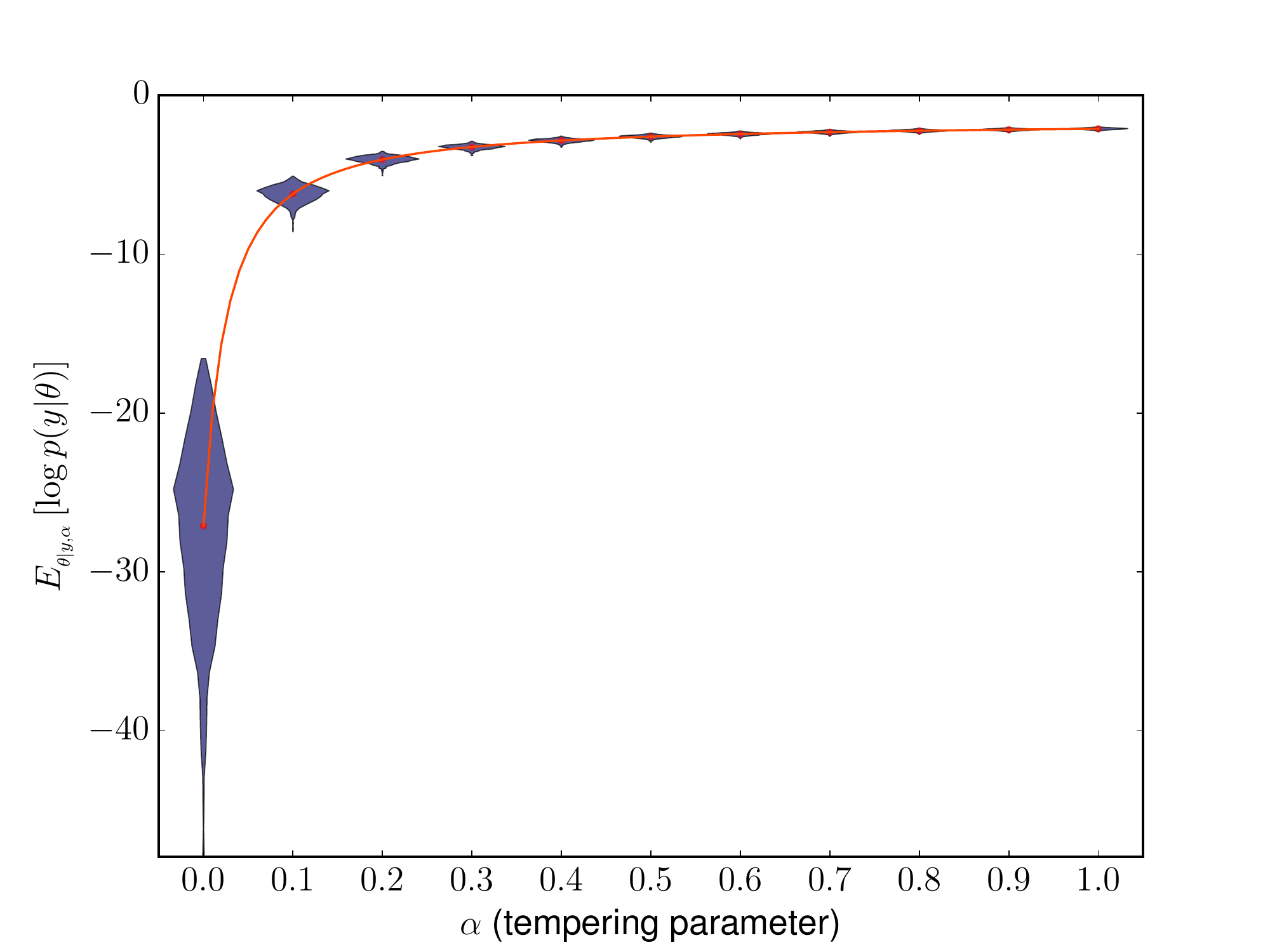}
\caption{(Example 1) Distribution of the expected deviance and its expectation with respect to the evidence for the data model \eqref{eq:linearmodel} with $A = 101$ and $\sigma_\epsilon^2 = 4.0$,  $\sigma_p^2 = 10^{-2}$.}
\label{fg:Ex1deviance}
\end{figure}
In Figure \ref{fg:Ex1deviance}, we represent the expected deviance in term of the tempering parameter, and its expectation with respect to a data sample of size $10^3$, where the tractile function $h$ is approximated by Monte Carlo sampling with $10^3$ samples. We also plot the distribution of the expected deviance at ten tempering points. The main observation here is the important variance reduction from the prior ($\alpha =0$) to the posterior ($\alpha =1$). As statement of this significant outcome, we observe that it is now possible to accurately estimate statistics with very few data, under the posterior distribution, with evaluations of $\bm{g}$ performed at the prior stage. 
\begin{figure}[H]
\centering
\includegraphics[width=1.0\textwidth]{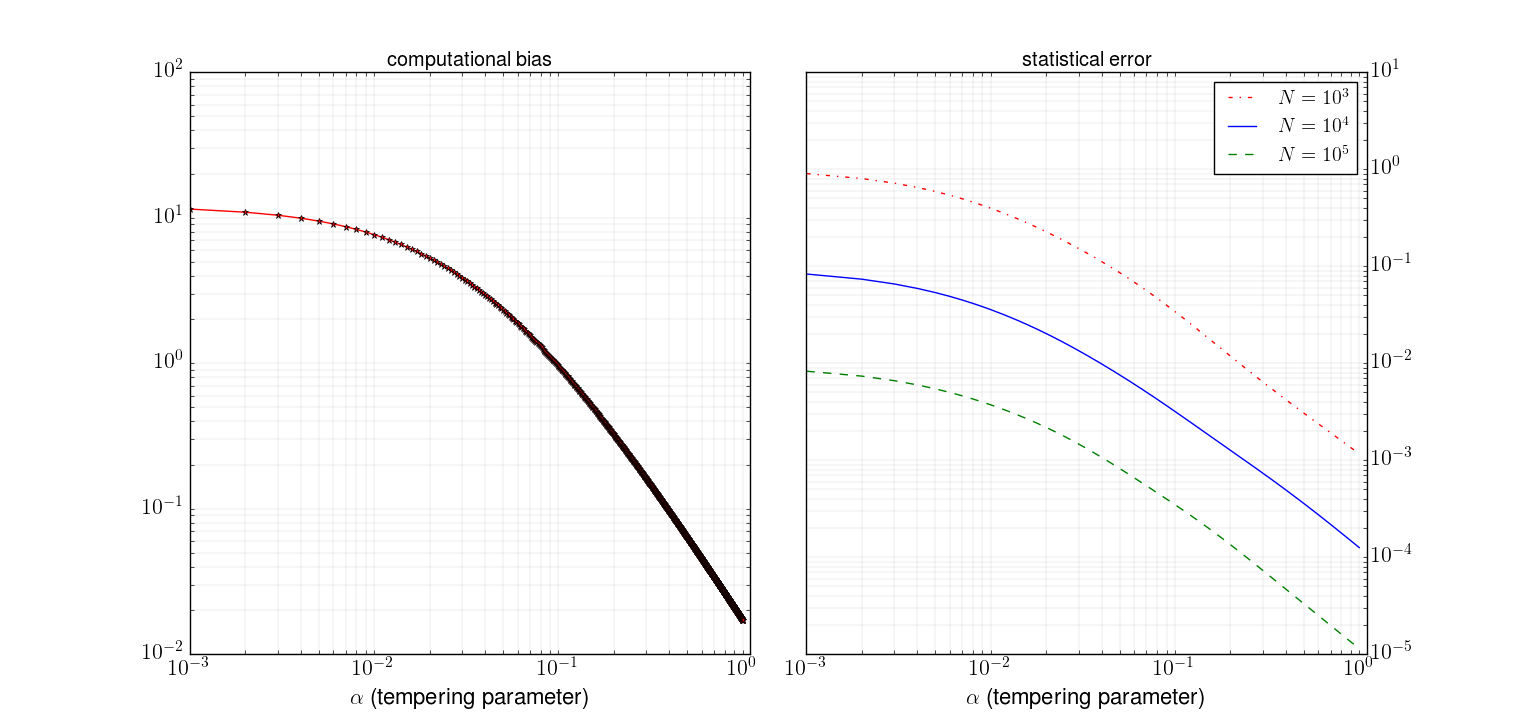}
\caption{(Example 1) Error analysis}
\label{fg:Ex1error}
\end{figure}
Figure \ref{fg:Ex1error} depicts the variations of the computational bias with respect to the tempering parameter and the statistical error for differents $N$, number of random variables generated from the prior, in the Monte Carlo sampling to estimate $h(\alpha)$. The computational bias is approximated as the mean value over a dataset of size $10^3$ of the absolute value of the exact $\mathbb{E}_{\theta \vert y; \alpha} \left[ \log p(y \vert \theta) \right]$ minus the expected deviance at \eqref{eq:tractE} evaluated with the exact tractile function $h(\alpha)$. Here, the bias is purely associated to the tempering quadrature error, and is improving by order $10^{-3}$ from the prior to the posterior. We let the statistical error be the $L^2$ norm of the difference between the exact deviance and the expected deviance \eqref{eq:tractE} evaluated with  $h_{_N}(\alpha)$, where  $h_{_N}(\alpha)$ is the Monte Carlo approximation of the tractile function with $N$ samples. The statistical error plot (right-plot, Figure \ref{fg:Ex1error}) allows to quantify the variance reduction observed in Figure \ref{fg:Ex1deviance}. From each of the three cases, $N = 10^{3}, 10^{4}, 10^{5}$, one can see a significant variance reduction of the expected deviance from $10^3$ at the prior stage to $1$ at the posterior one.

\subsection{Example 2: Source inversion with Wasserstein metric-based likelihood}
We now consider the example of source inversion problem, treated in \cite{Motamed2019},  where an initial wave pulse propagates with a constant speed. The mathematical governing equation is the one-dimensional wave equation
\begin{eqnarray}
\label{eq1}
\left\{
\begin{array}{ll}
\displaystyle{u_{tt}(t,x) - u_{xx}(t,x) = 0 } \; & \displaystyle{\text{in}\quad ]0,T]\times \mathbb{R}, }
\vspace{0.1cm} \\
\displaystyle{u(0,x) = u^0(x, \bm{\theta})} \; & \displaystyle{\text{in}\quad  \mathbb{R}, } 
\vspace{0.1cm} \\
\displaystyle{u_t(0,x) = 0} \; & \displaystyle{ \text{in}\quad \mathbb{R}. }
\end{array}
\right.
\end{eqnarray}
We aim to assess how our continuation inversion approach performs for complicated structure of the measurement errors composed with a multiplicative and additive noises. We consider an initial pulse of the form
\begin{eqnarray*}
u^0(x, \bm{\theta}) = a\exp \left(-100(x-x_0-0.5)^2 \right)+  a \exp \left(-100(x-x_0)^2 \right) + a \exp \left(-100(x-x_0+0.5)^2 \right)
\end{eqnarray*}
from an unknown location $x_0$ with an unknown amplitude $a$ that we aim to learn continuously from the prior stage to the posterior one. Therefore the parameter of interest is $\bm{\theta} = (x_0, a)$.  The forward model $g$ is the solution of \eqref{eq1} that is in the form
\begin{eqnarray}
 g(t,x, \bm{\theta}) \myeq  u(t,x) = \frac{1}{2}u^0(x-t,\bm{\theta}) + \frac{1}{2}u^0(x+t,\bm{\theta}).
\end{eqnarray}
Seven receivers, $N_r = 7$, are placed at locations $x_1 = -3$, $x_2 = -2$, $x_3 = -1$, $x_4 = 0$,  $x_5 = 1$, $x_6 = 2$ and $x_7 = 3$ to record polluted discrete-time signal $\bm{y}_r$ over the time interval $[0,T]$ at $N_{_T}$ time-steps $t_k = \frac{k-1}{N_{_T} -1} T$, $ k=1, \cdots, N_{_T}$. The data $\bm{Y}$ is represented as $\bm{Y} = \left\{ \bm{y}_1, \cdots , \bm{y}_{N_r}\right\}$, where the $r^{th}$ receiver is counting the responses $y(t_k,x_r)$ at all the $N_{_T}$ time-steps, $\bm{y}_r = \left(y(t_k,x_r)\right)_{1 \leq k \leq N_{_T}}$ where
\begin{eqnarray}
 y(t_k,x_r) = \epsilon_{kr}^{(1)} g(t_k,x_r, \bm{\theta}_t) + \epsilon_{kr}^{(2)},
\end{eqnarray}
with $\epsilon_{kr}^{(1)} \sim \Gamma(60,1/60)$ and $\epsilon_{kr}^{(2)} \sim \mathcal{U}(-0.25,0.25)$. In the numerical experiment, we consider $T=5$.
\begin{figure}[H]
\centering
\includegraphics[width=0.70\textwidth]{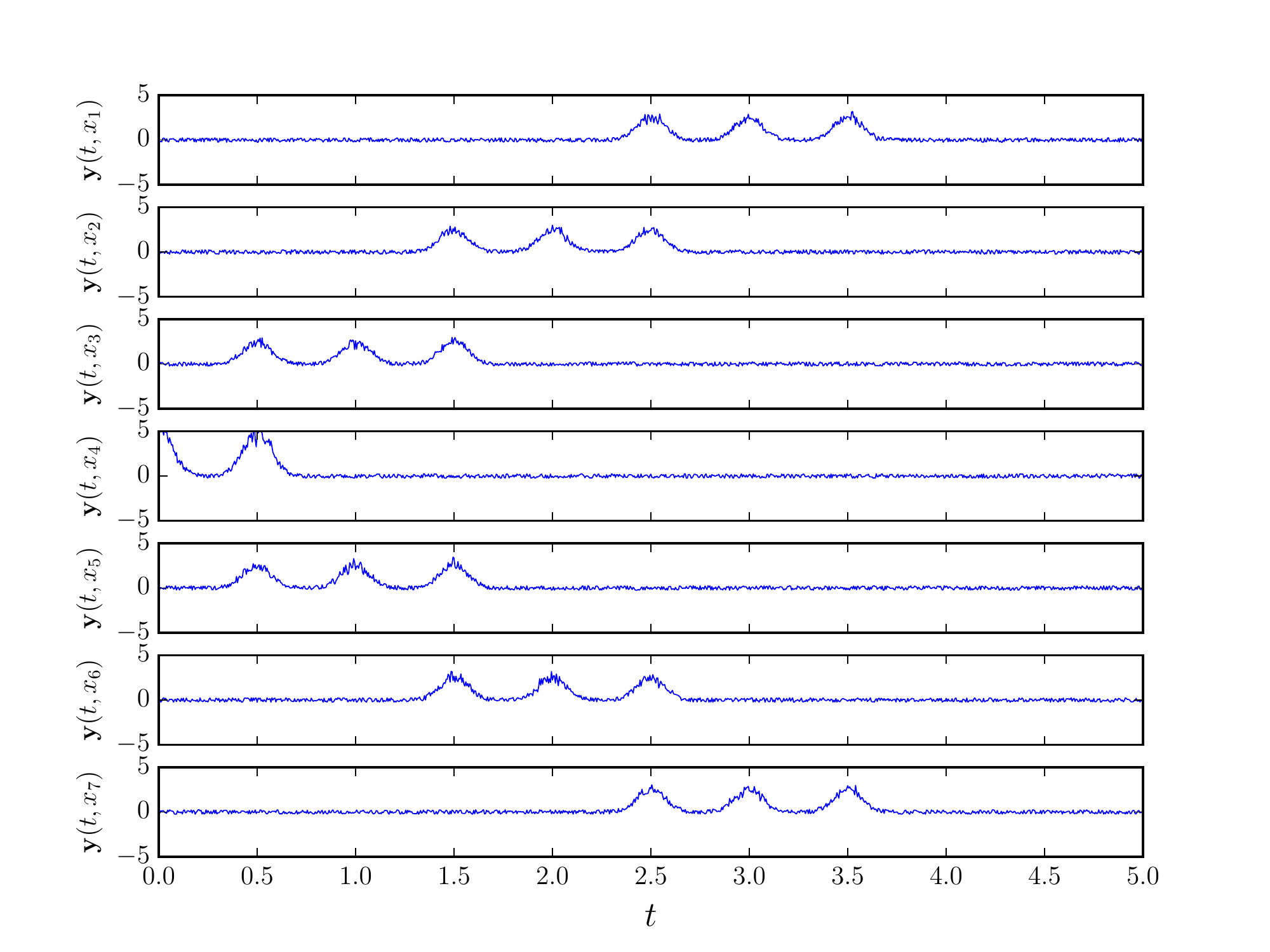}
\caption{(Example 2) Recorded signal $\bm{Y}$ at the seven receivers with $N_{_T}=1001$.}
\label{fg:Ex2signal}
\end{figure}
To compute the power posterior distributions of $\bm{\theta}$, we consider the likelihood function as follows
\begin{eqnarray}
p(\bm{Y}|\bm{\theta}) = s^{N_{_T}} \exp \left(-s \sum_{r=1}^{N_r} \dis_{_W}\left(\bm{y}_r, \bm{g}_r(\bm{\theta})\right) \right),
\end{eqnarray}
where $\bm{g}_r(\bm{\theta})$ gathers the $N_{_T}$ outputs of the forward model, $\bm{g}_r(\bm{\theta}) = \left(g(t_k,x_r,\bm{\theta})\right)_{1 \leq k \leq N_{_T}}$ evaluated with the random parameter $\bm{\theta}$, and $\dis_{_W}$ stands for the quadratic Wasserstein distance. The errors dispersion parameter $s$ is also not available to us and is assumed random, $s \sim \Gamma(1, 0.1)$.

\begin{figure}[H]
\centering
\includegraphics[width=0.7\textwidth]{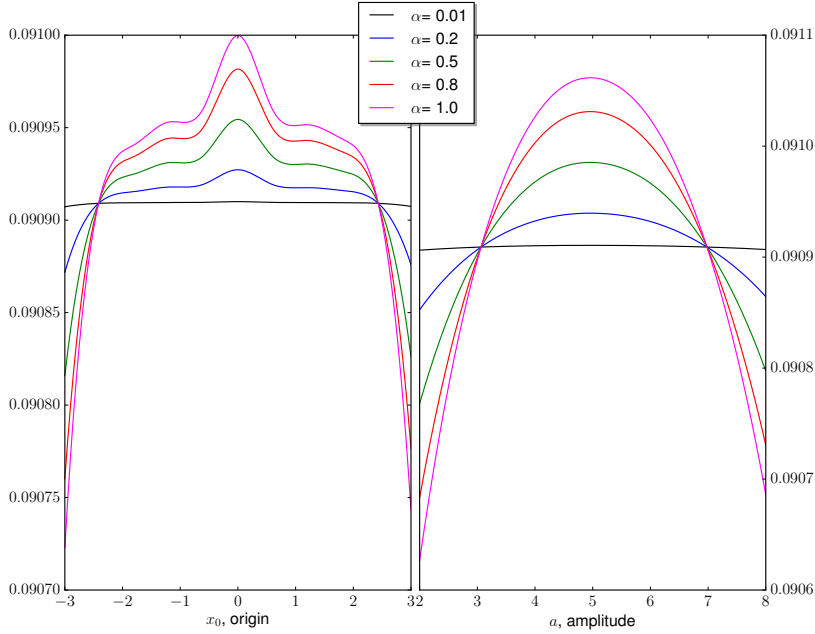}
\caption{(Example 2) Marginal transition distributions of the origin $x_0$ (left-plot) and the amplitude $a$ (right-plot) conditioned with a recorded signal of length $N_{_T}=101$  at the seven receivers.}
\label{fg:Ex2pdfs}
\end{figure}
Figure \ref{fg:Ex2pdfs} presents the marginal power posterior distributions of the location $x_0$ (left plot) and of the amplitude $a$ (right plot) at different tempering points.  The simulation is performed with a single synthetic data generated with the \textit{"true"} location and \textit{"true"}pulse amplitude  $\bm{\theta}_t = (0.0, 0.5)$, with signal length of $N_{_T} = 101$. 
The tractile function $h$ is approximated with the Monte Carlo sampling with $10^2$ samples.  Our continuation method handles perfectly that complex noise structure as the marginal power posteriors concentrate around the true values when the tempering parameter goes to $1$.

\subsection{Example 3: Bivariate multimodal inversion}
In this example we confront our approach, in terms of capturing the evolution of the distribution from the prior to the posterior, to a mixture of $M$ two-dimensional Gaussians. Let us consider the data representation
\begin{eqnarray}
 \bm{Y} = \bm{\theta} + \bm{\epsilon},
\end{eqnarray}
where the measurement errors is Gaussian, $\bm{\epsilon} \sim \mathcal{N}\left(\bm{0}, \sigma^2 \bm{I}_2\right)$, $\bm{I}_2$ is the $2\times 2$ unit matrix, and $\bm{\theta}$ is drawn form the bivariate uniform prior, $\bm{\theta} \sim \mathcal{U}\left([0,10]\right) \times \mathcal{U}\left([0,10]\right)$. We consider a synthetic data, of size $N_{_{\tiny{\hbox{data}}}}$, collected from the categorical distribution of $M$ modes, $\bm{Y} \sim \mathcal{N}\left(\bm{\theta}_t, \sigma^2 \bm{I}_2\right)$ with $\bm{\theta}_t = \bm{\mu}_1 \vee \bm{\mu}_2 \vee \cdots \vee \bm{\mu}_{M}$, where $\vee$ stands for the logical connective operator "\textit{or inclusive}". This example  is built on the multimodal simulation case treated for assessing advance classes of MCMC methods \cite{Lian2001, Stojkova2017}. The likelihood function has the form
\begin{eqnarray}
p(\bm{Y}|\bm{\theta}) = \prod_{m=1}^{M} p^{[w(\bm{y}_m)= w_m]}(\bm{y}_m|\bm{\theta}) 
\end{eqnarray}
for a data $\bm{Y} = (\bm{y}_1, \cdots, \bm{y}_M)$ sampled from $\bm{Y} \sim \mathcal{N}\left(\bm{\theta}_t, \sigma^2 \bm{I}_2\right)$, where $[\cdot=\cdot]$ represents the Iverson bracket, $w(\bm{y}_m)$ is the weight of $\bm{y}_m$, and $w_m$ is the assigned weight to the mode $\bm{\mu}_m$, $m=1,\cdots, M$.

\begin{figure}[H]
\centering
\includegraphics[width=0.5\textwidth]{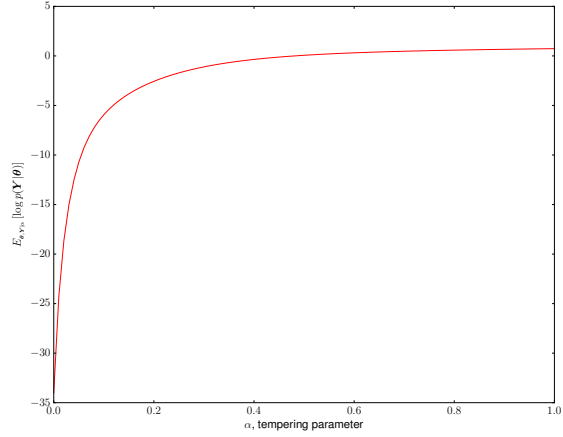}
\caption{(Example 3) Expected deviance with respect to joint distribution of the parameter of interest and the data.}
\label{fg:Ex3deviance}
\end{figure}
To demonstrate further the effectiveness of our proposed method, we investigate the ability of computing the expected deviance with respect to the joint distribution of the parameter of interest and the data. For $N_{_{\tiny{\hbox{data}}}} = 10^3$, $N=10^2$, we show in Figure \ref{fg:Ex3deviance} the evolution of the expected deviance in terms of the tempering parameter. The categorical distribution has $M=3$ modes in this numerical experiment with  $\bm{\mu}_1 = (2.18, 5.76)$, $\bm{\mu}_2 = (8.41, 1.68)$ and $\bm{\mu}_3 = (5.54, 6.86)$.

\begin{figure}[H]
\centering
\includegraphics[width=0.7\textwidth]{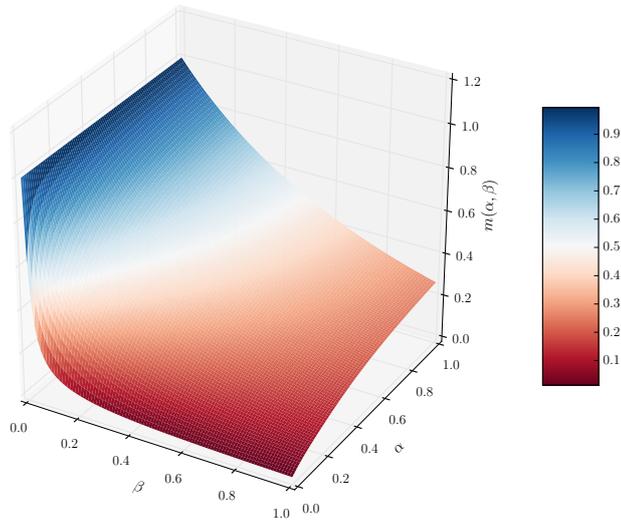}
\caption{(Example 3) Moments generating function surface in the unit square $[0,1]\times[0,1]$.}
\label{fg:Ex3mgf}
\end{figure}
Using the same configuration: categorical distribution with three modes,  $N_{_{\tiny{\hbox{data}}}} = 10^3$, $N=10^2$, we display the MGF for $(\alpha,\beta) \in [0,1]\times[0,1]$ to stay consistent  with the fact
the MGF only exists for $\beta \in [0, 1]$ and as our interest to the MGF is for small $\beta$.

In multimodal inversion, one important observation when using the Bayesian approach is that all the modes might not be captured for small size datasets.  We investigate this phenomena using our continuation approach by setting two different configurations. The first is a weighted categorical distribution with $M=3$ modes, where $w_1 = 0.3$, $w_2 = 0.5$ and $w_3=0.2$ (Figure \ref{fg:Ex3posteriorPDFs1}).  The second configuration deals with a categorical distribution with $M=20$, where $w_m=1/20$, for $m=1,\cdots,20$ and the modes locations are collected from \cite{Lian2001}, where they are uniformly sampled from the prior $\pi(\bm{\theta})$ and given by
\begin{align*}
&\bm{\mu}_1 = (2.18, 5.76), & \bm{\mu}_6 = (3.25, 3.47),&   & \bm{\mu}_{11} = (5.41, 2.65), & &\bm{\mu}_{16} = (4.93, 1.50),\\
&\bm{\mu}_2 = (8.67, 9.59), & \bm{\mu}_7 = (1.70, 0.50),&   & \bm{\mu}_{12} = (2.70, 7.88), & &\bm{\mu}_{17} = (1.83, 0.09),\\
&\bm{\mu}_3 = (4.24, 8.48), & \bm{\mu}_8 = (4.59, 5.60),&   & \bm{\mu}_{13} = (4.98, 3.70), & &\bm{\mu}_{18} = (2.26, 0.31),\\
&\bm{\mu}_4 = (8.41, 1.68), & \bm{\mu}_9 = (6.91, 5.81), &  & \bm{\mu}_{14} = (1.14, 2.39), & &\bm{\mu}_{19} = (5.54, 6.86),\\
&\bm{\mu}_5 = (3.93, 8.82), & \bm{\mu}_{10} = (6.87, 5.40),& & \bm{\mu}_{15} = (8.33, 9.50), & &\bm{\mu}_{20} = (1.69, 8.11).
\end{align*}
Figures \ref{fg:Ex3posteriorPDFs1} and \ref{fg:Ex3posteriorPDFs2} are obtained with a Monte Carlo approximation of $N=10^2$ samples of the tractile function $h$, and Simpson quadrature with $11$ gridpoints is used for the tempering quadrature between $\alpha_k$ and $\alpha_{k+1}$.

\begin{figure}[H]
\centering
\includegraphics[width=1.09\textwidth]{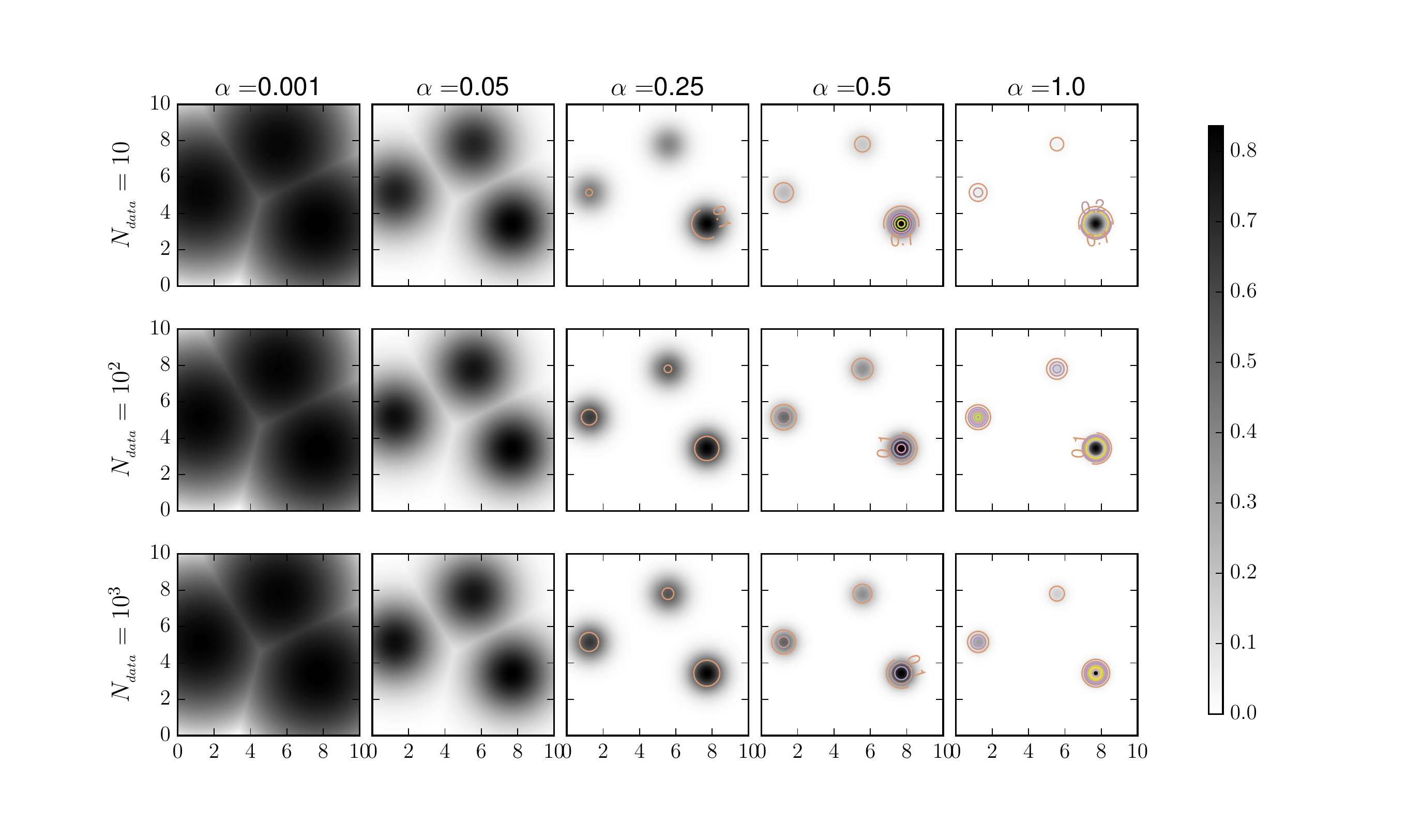}
\caption{(Example 3) Power posteriors distributions with three modes}
\label{fg:Ex3posteriorPDFs1}
\end{figure}
\begin{figure}[H]
\centering
\includegraphics[width=1.09\textwidth]{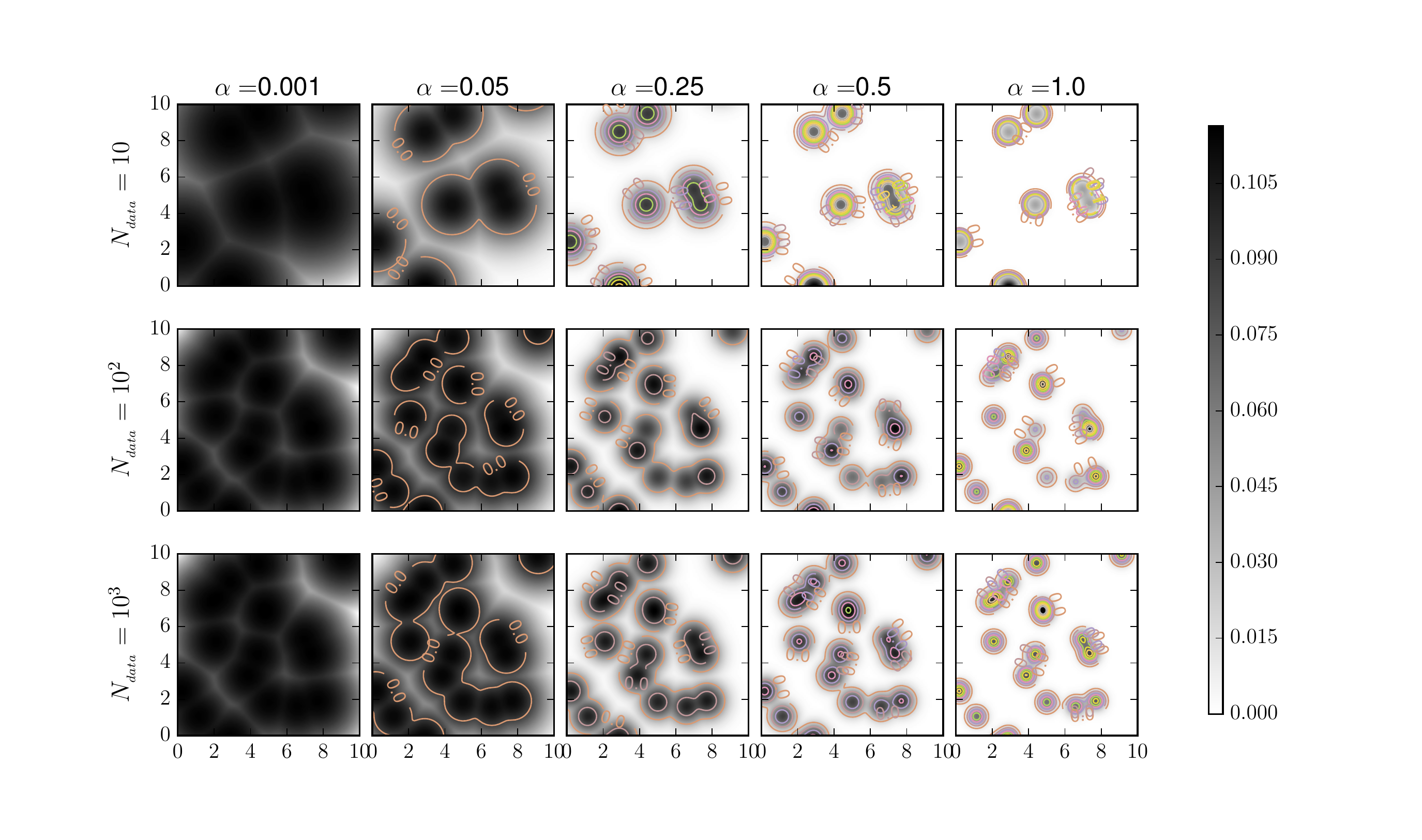}
\caption{(Example 3) Power posteriors distributions with 20 modes}
\label{fg:Ex3posteriorPDFs2}
\end{figure}
In Figures \ref{fg:Ex3posteriorPDFs1} and \ref{fg:Ex3posteriorPDFs2}, we visualize the update of the distribution of the parameter of interest. For the first case, $M=3$, the power posterior distribution shows all the three modes at the earlier stage even for the small size dataset, and the distribution concentrates around the modes as the tempering parameter goes to $1$. In Figure \ref{fg:Ex3posteriorPDFs2}, our continuation approach exhibits, at the earlier stage, the effect of non-capturing all the modes  for small size datasets. However, the number of captured modes increased with the dataset for the largest size.

\section{Conclusions}\label{sec5}
We presented a continuation method for interpreting the posterior probability density function, as the terminal stage of a path of transition distribution functions between the prior and the posterior, for Bayesian inverse problems with exponential likelihood function. A mathematical analysis for the existence and the uniqueness of the transition distributions has been presented to support the consistency of our results. We managed to hold back all the evaluations of the forward model at the prior stage in the approximation of the transition distribution. That achieves the computational tractability of the posterior distribution in opposite to classical MCMC methods where the acceptance condition complicates the counting, a priori, of the number the forward model evaluations. The computational tractability was addressed through the tractability of the expected deviance using the moments generating function of the log-likelihood function. Our research has stressed the important potential of continuation methods for two of the three classes of uncertainty quantification tasks enumerated in \cite{Rude2018}.

The presented numerical results dealt with evaluations of simple forward models while for real-world applications the output of the forward model is typically obtained through numerical approximations of PDEs that require physical discretization, and therefore incorporates the model bias.  One window scoping the improvement of this approach that we aim to investigate is the combination of the multilevel Monte Carlo method, for the approximation of the tractile function, with high precision quadrature schemes for the tempering quadrature.

One intriguing  question coming up when observing the transition distributions is: Are the data goodness criteria, for parameter estimation problems,  following the same updating rate when the tempering parameter varies from $0$ to $1$? Depending on the answer, one would either reduce the method bias in the computation of the data goodness criteria by considering an earlier tempering point as the final stage. Or one can design a hierarchical clustering algorithm on a dataset for the inferential strength in parameter estimation problems. Future work will also explore this aspect for the Kullback-Leibler divergence in the framework of Bayesian experimental design.


\section*{Acknowledgments}
The author is grateful to Kody J. Law (Manchester University), Youssef Marzouk (MIT, Cambridge) and Raul Tempone (KAUST \& Alexander von Humboldt Professor at RWTH Aachen University) for valuable discussions and comments.


\bibliographystyle{acm}
\bibliography{reference}

\newpage
\appendix

\section{Proof of Theorem \ref{theorem2}} \label{sec:app1}
The goal is to prove the existence and the uniqueness of solution for the Barbashin equation (\ref{eq:dif1}b). We introduce the abstract function $\hat{v}$ defined as
\begin{eqnarray}
  \hat{v} \colon
  \begin{aligned}[t]
        [0,1] &\longrightarrow \mathcal{S}_l\\
         \alpha &\longmapsto \hat{v}(\alpha) \myeq {v}(\alpha, \cdot).
  \end{aligned}
\end{eqnarray}
Therefore, we can rewrite (\ref{eq:dif1}b) as an abstract Cauchy problem
\begin{eqnarray}
\label{eq:abs1}
  \frac{\hbox{d} \hat{v}}{\hbox{d} \alpha} = A(\alpha) \hat{v}(\bm{\theta}),
\end{eqnarray}
where $\hat{v}(\bm{\theta}) = \hat{v}(\alpha)(\bm{\theta})$, the derivative $\frac{\hbox{d} \hat{v}}{\hbox{d} \alpha}$ is in the Fr\'echet sense, and the Cauchy operator $A$ is the integral operator given by
\begin{eqnarray}
   A(\alpha) \hat{v}(\bm{\theta}) = \int_\Theta \mathcal{H}(\alpha, \bm{\theta}, \bm{\eta}, \hat{v}(\bm{\eta})) \hbox{d}\bm{\eta}.
\end{eqnarray}
The classical solvability of \eqref{eq:abs1} follows from the strong continuity of the integral operator
\begin{eqnarray}
  A \colon
  \begin{aligned}[t]
        [0,1] &\longrightarrow \mathcal{L}(\mathcal{S})\\
         \alpha &\longmapsto A(\alpha),
  \end{aligned}
\end{eqnarray}
where $\mathcal{L}(\mathcal{S}_l)$ is the space of all bounded linear operators in the Banach space $\mathcal{S}_l$. Therefore we get the solvability since $A$ is strongly continuous in $\mathcal{L}(\mathcal{S}_l)$ by construction. Indeed, let $(\alpha_n)_{n>0}$ be a sequence of real valued numbers in $[0,1]$ that converges to $\alpha \in [0,1]$, 
and $\|\cdot\|_{\mathcal{L}}$ be an appropriate norm defined in $\mathcal{L}(\mathcal{S}_l)$, we have
\begin{eqnarray*}
\left\| A(\alpha_n) - A(\alpha) \right\|_{\mathcal{L}} 
& = & | \int_\Theta    \mathcal{H}(\alpha_n, \bm{\theta}, \bm{\eta}, \hat{v}(\bm{\eta}))  -  \mathcal{H}(\alpha, \bm{\theta}, \bm{\eta}, \hat{v}(\bm{\eta})) \hbox{d} \bm{\eta} |  \\
& = & | \int_\Theta    \mathcal{H}(\alpha_n, \bm{\theta}, \bm{\eta}, \hat{v}(\alpha_n)(\bm{\eta}))  -  \mathcal{H}(\alpha, \bm{\theta}, \bm{\eta}, \hat{v}(\alpha)(\bm{\eta})) \hbox{d} \bm{\eta} |  \\ 
&\leq &  \int_\Theta  |  \mathcal{H}(\alpha_n, \bm{\theta}, \bm{\eta}, \hat{v}(\alpha_n)(\bm{\eta}))  -  \mathcal{H}(\alpha, \bm{\theta}, \bm{\eta}, \hat{v}(\alpha)(\bm{\eta})) | \hbox{d}\bm{\eta}  \\
& \leq & \int_\Theta  | \mathcal{H}(\alpha_n, \bm{\theta}, \bm{\eta}, \hat{v}(\alpha_n)(\bm{\eta}))  - \mathcal{H}(\alpha, \bm{\theta}, \bm{\eta}, \hat{v}(\alpha_n)(\bm{\eta})) | \hbox{d}\bm{\eta} \\
& &  +  \int_\Theta  | \mathcal{H}(\alpha, \bm{\theta}, \bm{\eta}, \hat{v}(\alpha_n)(\bm{\eta})) - \mathcal{H}(\alpha, \bm{\theta}, \bm{\eta}, \hat{v}(\alpha)(\bm{\eta})) | \hbox{d}\bm{\eta}
\end{eqnarray*}
The functional $\hat{v}$ is continuous, and as  $\bm{g}$ is assumed continuous with respect to $\bm{\theta}$, the  kernel $\mathcal{H}$ is continuous. Therefore $A$ is strongly continuous. This completes the proof.

\section{Derivation of the kernel trace equations}\label{sec:app3}
We use the orthogonality condition \eqref{eq:orth} and the eigenvalue problem setting \eqref{eq:eigen} to characterize the dynamics of the eigenvalue $\lambda_n$.
\begin{eqnarray*}
\lambda_n(\alpha) &=& \lambda_n(\alpha) \left\langle k^\alpha_n(\bm{\theta}),k^\alpha_n (\bm{\theta})\right\rangle_{\alpha} \\
  & = & \lambda_n(\alpha)  \int_\Theta k^\alpha_n(\bm{\theta}) k^\alpha_n(\bm{\theta}) \pi(\bm{\theta}|\bm{Y};\alpha)\hbox{d} \bm{\theta}\\
 & = &  \int_\Theta k^\alpha_n(\bm{\theta}) \lambda_n(\alpha) k^\alpha_n(\bm{\theta}) \pi(\bm{\theta}|\bm{Y};\alpha) \hbox{d} \bm{\theta}\\
  & = &  \int_\Theta k^\alpha_n(\bm{\theta})\int_\Theta  \mathcal{K}(\alpha, \bm{\theta}, \bm{\eta})k^\alpha_n(\bm{\eta}) \pi(\bm{\eta}|\bm{Y};\alpha) \hbox{d}\bm{\eta} \;\pi(\bm{\theta}|\bm{Y};\alpha) \hbox{d} \bm{\theta}.
\end{eqnarray*}
Differentiation with respect to the parameter $\alpha$ gives
\begin{eqnarray*}
 \frac{\partial \lambda_n(\alpha)}{\partial \alpha} 
 &=& 
 \int_\Theta k^\alpha_n(\bm{\theta}) \int_\Theta  \mathcal{K}(\alpha, \bm{\theta}, \bm{\eta})k^\alpha_n(\bm{\eta})  \frac{\partial }{\partial \alpha} \pi(\bm{\eta}|\bm{Y};\alpha) \hbox{d} \bm{\eta} \;\pi(\bm{\theta}|\bm{Y};\alpha) \hbox{d} \bm{\theta}    \\
 & & + 
  \int_\Theta k^\alpha_n(\bm{\theta})\int_\Theta  \mathcal{K}(\alpha, \bm{\theta}, \bm{\eta})k^\alpha_n(\bm{\eta}) \pi(\bm{\eta}|\bm{Y};\alpha) \hbox{d} \bm{\eta} \; \frac{\partial }{\partial \alpha} \pi(\bm{\theta}|\bm{Y};\alpha) \hbox{d} \bm{\theta},
\end{eqnarray*}
yet from \eqref{eq:pide}, we obtain
\begin{eqnarray*}
\int_\Theta  \mathcal{K}(\alpha, \bm{\theta}, \bm{\eta})k^\alpha_n(\bm{\eta}) \frac{\partial }{\partial \alpha}  \pi(\bm{\eta}|\bm{Y};\alpha) \hbox{d} \bm{\eta} &=& \int_\Theta  \log p(\bm{Y}|\bm{\eta})  \mathcal{K}(\alpha, \bm{\theta}, \bm{\eta})k^\alpha_n(\bm{\eta})  \pi(\bm{\eta}|\bm{Y};\alpha) \hbox{d} \bm{\eta} \\
& & - \mathbb{E}_{\bm{\eta}|\bm{Y};\alpha} \left[\log p(\bm{Y}|\bm{\eta}) \right] \int_\Theta  \mathcal{K}(\alpha, \bm{\theta}, \bm{\eta})k^\alpha_n(\bm{\eta})  \pi(\bm{\eta}|\bm{Y};\alpha) \hbox{d} \bm{\eta}.
%
\end{eqnarray*}
 It follows that
 \begin{eqnarray*}
  I_1 & \myeq  & \int_\Theta k^\alpha_n(\bm{\theta}) \int_\Theta  \mathcal{K}(\alpha, \bm{\theta}, \bm{\eta})k^\alpha_n(\bm{\eta})  \frac{\partial }{\partial \alpha} \pi(\bm{\eta}|\bm{Y};\alpha) \hbox{d} \bm{\eta} \;\pi(\bm{\theta}|\bm{Y};\alpha) \hbox{d} \bm{\theta} \\
  &=&   \int_\Theta k^\alpha_n(\bm{\theta}) \int_\Theta  \log p(\bm{Y}|\bm{\eta})  \mathcal{K}(\alpha, \bm{\theta}, \bm{\eta})k^\alpha_n(\bm{\eta})  \pi(\bm{\eta}|\bm{Y};\alpha) \hbox{d} \bm{\eta} \;\pi(\bm{\theta}|\bm{Y};\alpha) \hbox{d} \bm{\theta} \\
& & - \mathbb{E}_{\bm{\eta}|\bm{Y};\alpha} \left[\log p(\bm{Y}|\bm{\eta}) \right] \int_\Theta k^\alpha_n(\bm{\theta}) \int_\Theta  \mathcal{K}(\alpha, \bm{\theta}, \bm{\eta})k^\alpha_n(\bm{\eta})  \pi(\bm{\eta}|\bm{Y};\alpha) \hbox{d} \bm{\eta} \;\pi(\bm{\theta}|\bm{Y};\alpha) \hbox{d} \bm{\theta}.\\
  &=&   \int_\Theta k^\alpha_n(\bm{\theta}) \int_\Theta  \log p(\bm{Y}|\bm{\eta})  \sum_{l=1}^{\infty} \lambda_l(\alpha)k_l^\alpha(\bm{\theta})l_k^\alpha(\bm{\eta})k^\alpha_n(\bm{\eta})  \pi(\bm{\eta}|\bm{Y};\alpha) \hbox{d} \bm{\eta} \;\pi(\bm{\theta}|\bm{Y};\alpha) \hbox{d} \bm{\theta} \\
& & - \mathbb{E}_{\bm{\eta}|\bm{Y};\alpha} \left[\log p(\bm{Y}|\bm{\eta}) \right] \int_\Theta k^\alpha_n(\bm{\theta}) \lambda_n(\alpha) k^\alpha_n(\bm{\theta})  \;\pi(\bm{\theta}|\bm{Y};\alpha) \hbox{d} \bm{\theta}\\
  &=&   \sum_{l=1}^{\infty} \int_\Theta k^\alpha_n(\bm{\theta}) k_l^\alpha(\bm{\theta})\;\pi(\bm{\theta}|\bm{Y};\alpha) \hbox{d} \bm{\theta} \int_\Theta  \log p(\bm{Y}|\bm{\eta})  \lambda_l(\alpha)k_l^\alpha(\bm{\eta})k^\alpha_n(\bm{\eta})  \pi(\bm{\eta}|\bm{Y};\alpha) \hbox{d} \bm{\eta}  \\
& & - \lambda_n(\alpha)  \mathbb{E}_{\bm{\eta}|\bm{Y};\alpha} \left[\log p(\bm{Y}|\bm{\eta}) \right]\\
  &=&  \lambda_n(\alpha) \int_\Theta  \log p(\bm{Y}|\bm{\eta})  k_n^\alpha(\bm{\eta})k^\alpha_n(\bm{\eta})  \pi(\bm{\eta}|\bm{Y};\alpha) \hbox{d} \bm{\eta}  - \lambda_n(\alpha)  \mathbb{E}_{\bm{\eta}|\bm{Y};\alpha} \left[\log p(\bm{Y}|\bm{\eta}) \right].
 \end{eqnarray*}
 Similarly
 \begin{eqnarray*}
 I_2 & \myeq & \int_\Theta k^\alpha_n(\bm{\theta})\int_\Theta  \mathcal{K}(\alpha, \bm{\theta}, \bm{\eta})k^\alpha_n(\bm{\eta}) \pi(\bm{\eta}|\bm{Y};\alpha) \hbox{d} \bm{\eta} \; \frac{\partial }{\partial \alpha} \pi(\bm{\theta}|\bm{Y};\alpha) \hbox{d} \bm{\theta} \\
 & =&  \lambda_n(\alpha) \int_\Theta  \log p(\bm{Y}|\bm{\eta})  k_n^\alpha(\bm{\eta})k^\alpha_n(\bm{\eta})  \pi(\bm{\eta}|\bm{Y};\alpha) \hbox{d} \bm{\eta}  - \lambda_n(\alpha)  \mathbb{E}_{\bm{\eta}|\bm{Y};\alpha} \left[\log p(\bm{Y}|\bm{\eta}) \right].
 \end{eqnarray*}
 Therefore, the dynamics of an eigenvalue is
 \begin{eqnarray}
 \label{eq:eigendynamics1}
 \frac{\partial \lambda_n(\alpha)}{\partial \alpha}  = 2  \lambda_n(\alpha) \int_\Theta  \log p(\bm{Y}|\bm{\eta})  k_n^\alpha(\bm{\eta})k^\alpha_n(\bm{\eta})  \pi(\bm{\eta}|\bm{Y};\alpha) \hbox{d} \bm{\eta}  - 2 \lambda_n(\alpha)  \mathbb{E}_{\bm{\eta}|\bm{Y};\alpha} \left[\log p(\bm{Y}|\bm{\eta}) \right].
 \end{eqnarray}
Using
 \begin{eqnarray*}
  \log p(\bm{Y}|\bm{\eta}) = \log(\Cons) -  s \mathcal{K}(\alpha, \bm{\eta}, \bm{\eta})\quad \hbox{and}\quad  \mathbb{E}_{\bm{\eta}|\bm{Y};\alpha} \left[\log p(\bm{Y}|\bm{\eta}) \right] &=& \log(\Cons) -  s \int_\Theta  \mathcal{K}(\alpha, \bm{\eta}, \bm{\eta}) \pi(\bm{\theta}|\bm{Y};\alpha) \hbox{d}\bm{\eta} \\
  &=& \log(\Cons) -  s \sum_{n=1}^{\infty} \lambda_n(\alpha)  \\
  &=& \log(\Cons) -  s L_1(\alpha),
 \end{eqnarray*}
 we rewrite \eqref{eq:eigendynamics1} as 
\begin{eqnarray*}
 \frac{\partial \lambda_n(\alpha)}{\partial \alpha}  &=& 2s \lambda_n(\alpha)   L_1(\alpha) - 2s  \lambda_n(\alpha) \int_\Theta \mathcal{K}(\alpha, \bm{\eta}, \bm{\eta})  k_n^\alpha(\bm{\eta})k^\alpha_n(\bm{\eta})  \pi(\bm{\eta}|\bm{Y};\alpha) \hbox{d} \bm{\eta}\\
 &=&  2s\lambda_n(\alpha)   L_1(\alpha) - 2s  \int_\Theta \mathcal{K}(\alpha, \bm{\eta}, \bm{\eta})  \lambda_n(\alpha) k_n^\alpha(\bm{\eta})k^\alpha_n(\bm{\eta})  \pi(\bm{\eta}|\bm{Y};\alpha) \hbox{d} \bm{\eta},
\end{eqnarray*}
and summing over $n$, we obtain
\begin{eqnarray*}
  \frac{\partial  L_1(\alpha)}{\partial \alpha}  
  & =& 2s L_1^2(\alpha) - 2s  \int_\Theta \mathcal{K}(\alpha, \bm{\eta}, \bm{\eta}) \sum_{n=1}^{\infty} \lambda_n(\alpha) k_n^\alpha(\bm{\eta})k^\alpha_n(\bm{\eta})  \pi(\bm{\eta}|\bm{Y};\alpha) \hbox{d} \bm{\eta} \quad \hbox{as} \quad  L_1(\alpha) = \sum_{n=1}^{\infty} \lambda_n(\alpha)\\
    & =& 2s L_1^2(\alpha) - 2s  \int_\Theta \mathcal{K}^2(\alpha, \bm{\eta}, \bm{\eta})  \pi(\bm{\eta}|\bm{Y};\alpha) \hbox{d} \bm{\eta} 
\end{eqnarray*}
Thanks to the Schur product theorem, the Hadamard product $\mathcal{K}^2 = \mathcal{K} \circ \mathcal{K}$ of a positive definite kernel $\mathcal{K}$ is a symmetric positive definite kernel, and the eigenstructure expansion of $\mathcal{K}^2$, see \cite{Reade1981}, is

\begin{eqnarray}
  \mathcal{K}^2(\alpha, \bm{\eta}, \bm{\eta}) = \sum_{n=1}^{\infty} \lambda^2_n(\alpha) k_n^\alpha(\bm{\eta})k^\alpha_n(\bm{\eta}),
\end{eqnarray}
where $(\lambda_n, k^\alpha_n)$ are the eigenpair of the kernel $\mathcal{K}$. Therefore, we have
\begin{eqnarray*}
 \int_\Theta \mathcal{K}^2(\alpha, \bm{\eta}, \bm{\eta})  \pi(\bm{\eta}|\bm{Y};\alpha) \hbox{d} \bm{\eta} 
 & =&  \sum_{n=1}^{\infty} \lambda^2_n(\alpha) \int_\Theta    k_n^\alpha(\bm{\eta})k^\alpha_n(\bm{\eta}) \pi(\bm{\eta}|\bm{Y};\alpha) \hbox{d} \bm{\eta} \\
  & =&  \sum_{n=1}^{\infty} \lambda^2_n(\alpha)\\
  &=& L_2(\alpha),
\end{eqnarray*}
which implies that
\begin{eqnarray}
\label{eq:L1}
  \frac{\partial  L_1(\alpha)}{\partial \alpha}   = 2s\Big( L_1^2(\alpha) - L_2(\alpha) \Big).
\end{eqnarray}
Similar analysis, starting on differentiating the eigenvalue  $\lambda^2_n(\alpha)$ of $\mathcal{K}^2$, results in the dynamics of $L_2$ as
\begin{eqnarray}
  \label{eq:L2}
  \frac{\partial  L_2(\alpha)}{\partial \alpha}   = 2s\Big( L_1(\alpha)L_2(\alpha) - L_3(\alpha) \Big).
\end{eqnarray}
The skewness condition \eqref{eq:skewcondition} in terms of $L_i$, $i=1,2,3$ is
\begin{eqnarray}
\label{eq:skewL}
 - L_3(\alpha) = - 3 L_2(\alpha) L_1(\alpha) + 2 L_1^3(\alpha).
\end{eqnarray}
We gather together the equations \eqref{eq:L1}, \eqref{eq:L2} and \eqref{eq:skewL} to state the computational tractability of the expected deviance as
\begin{eqnarray*}
\left\{
\begin{array}{lll}
\displaystyle{\frac{\partial  L_1(\alpha)}{\partial \alpha}   = 2s\Big( L_1^2(\alpha) - L_2(\alpha) \Big),}&  \\\\
\displaystyle{ \frac{\partial  L_2(\alpha)}{\partial \alpha}   = 2sL_1(\alpha)\Big( L_1^2(\alpha) - L_2(\alpha) \Big).}&  
\end{array}
\right.
\end{eqnarray*}
This completes the derivation.

\end{document}